\newcommand{\Eq}[1]{(\ref{eq:#1})}
\newcommand{\Th}[1]{Thm.~\ref{thm:#1}}
\newcommand{\Cor}[1]{Cor.~\ref{cor:#1}}
\newcommand{\Lem}[1]{Lem.~\ref{lem:#1}}
\newcommand{\Sec}[1]{\S \ref{sec:#1}}
\newcommand{\Fig}[1]{Fig.~\ref{fig:#1}}
\newcommand{\App}[1]{Appendix~\ref{app:#1}}
\newcommand{\InsertFig}[4]{
\begin{figure}[ht]
 \centerline{
 \includegraphics[width=#4]{./figs/#1}
 }
 \caption{{\footnotesize #2}
 \label{fig:#3}}
\end{figure}}
\newcommand{\bC}{{\mathbb{ C}}}
\newcommand{\bN}{{\mathbb{ N}}}
\newcommand{\bR}{{\mathbb{ R}}}
\newcommand{\bS}{{\mathbb{ S}}}
\newcommand{\bT}{{\mathbb{ T}}}
\newcommand{\bZ}{{\mathbb{ Z}}}
\newcommand{\cU}{\mathcal{U}}
\newcommand{\eps}{\varepsilon}
\newcommand{\vphi}{\varphi}
\newcommand{\fun}{{\pi_1}}
\newcommand{\VectorField}{\mathcal{V}}
\newcommand{\Forms}{{\Lambda}}
\newcommand{\rot}{\omega}		
\newcommand{\orbit}{{\rm orb}}
\newtheorem{thm}{Theorem} 
\newtheorem{lem}[thm]{Lemma}
\newtheorem{cor}[thm]{Corollary}
\theoremstyle{definition}
\newenvironment{example}[1][]
 {
	\setlength \leftmargini {0.0em}	
	\setlength \topsep {0.7em}		
	\begin{quote}
 {\textbf{Example} #1}
}
	{\end{quote}
 }
\newcommand{\bexam}[1][:]{\begin{example}[#1]}
\newcommand{\eexam}{\end{example}}
\newcommand{\beq}[1]{\begin{equation}\label{eq:#1}}
\newcommand{\eeq}{\end{equation}}
\newcommand{\bsplit}[1]{\begin{equation}\label{eq:#1}\begin{split}}
\newcommand{\esplit}{\end{split}\end{equation}}
\title{Symmetry Reduction by Lifting for Maps}
\author{
 H.~R. Dullin$^{1}$, H.E. Lomel\'{\i}$^{2}$, and J.~D.~Meiss$^{3}$\thanks
 {
 HRD was supported in part by ARC grant DP110102001.
 HEL was supported by Asociaci\'on Mexicana de Cultura and SNI 20216.
 JDM was supported in part by NSF grant DMS-0707659.
 }
\smallskip\\
$^{1}$School of Mathematics and Statistics\\
 The University of Sydney\\
 Sydney, NSW 2006, Australia\\
 {\tt Holger.Dullin@sydney.edu.au}
\smallskip\\
$^{2}$Department of Mathematics \\
 Instituto Tecnol\'{o}gico Aut\'{o}nomo de M\'{e}xico\\
	 Mexico, DF 01000 \\
	 {\tt lomeli@itam.mx }
\smallskip\\
$^{3}$Department of Applied Mathematics\\
 University of Colorado \\
 Boulder, CO 80309-0526, USA\\
 {\tt James.Meiss@colorado.edu}
}
\date{\today}
\begin{document}
\maketitle

\begin{abstract}
\noindent
We study diffeomorphisms that have one-parameter families of continuous symmetries.
For general maps, in contrast to the symplectic case, existence of a symmetry no longer implies existence of an invariant.
Conversely, a map with an invariant need not have a symmetry.
We show that when a symmetry  flow has a global Poincar\'{e}  section there are coordinates in which the map takes a reduced, skew-product form,
and hence allows for reduction of dimensionality.
We show that the reduction of a volume-preserving map again is volume preserving.
Finally we sharpen the Noether theorem for symplectic maps.
A number of illustrative examples are discussed and the method is compared with traditional reduction techniques.
\end{abstract}

\section{Introduction}\label{sec:Introduction}

A symmetry group of a dynamical system may be discrete or continuous. The existence of discrete symmetries implies the existence of related sets of orbits and imposes constraints on bifurcations \cite{Chossat00, Golubitsky02}. Continuous symmetry often results in reduction; for example, the classical results of Sophus Lie are concerned with the reduction of order of an ODE or PDE with symmetry \cite{Olver93}. In the Hamiltonian or Lagrangian context, a continuous symmetry implies, through Noether's theorem, the existence of an invariant, and the reduction of the dynamics by two dimensions \cite{Marsden99,Marsden07,Holm09}.

In this paper, we are are interested in global reduction theory for maps that have continuous symmetries. A continuous symmetry of a map $f: M \to M$ is a vector field whose flow commutes with the map. The set of symmetries forms a Lie algebra. Symmetry reduction for maps seems to have been first studied by Maeda \cite{Maeda80, Maeda87}, who showed that a map with an $s$-dimensional, Abelian Lie algebra of symmetries can be written locally in a skew-product form
\beq{skew}
	 F(\sigma,\tau) = (k(\sigma), \tau+ \rot(\sigma)) \;,
\eeq
where $\sigma \in \bR^{n-s}$ and $\tau \in \bR^s$, in the neighborhood of any point where the rank of the symmetry group is $s$.

We will show in \Sec{mapSymmetry} that if the symmetry flow has a global Poincar\'{e}  section $\Sigma$ that is relatively closed in a manifold $M$ (that is not necessarily compact), then we can find a covering map
of the form $p:\Sigma \times \bR\to M$. If some topological conditions are satisfied, then the map $f$ has a lift $F$ to $\Sigma \times \bR$ with the skew-product form \Eq{skew}, see \Th{Reduction}. The idea relies on the fact that the topologies of $\Sigma$ and $M$ may constitute an obstruction for this lift to exist. We call this procedure \emph{reduction by lifting}; it is a global version of Maeda's result.

This global reduction procedure is distinct from the general local reduction method due to Palais \cite{Palais61}. He showed that when an $s$-dimensional symmetry group has a proper action on $M$, then there is a codimension-$s$ local neighborhood of each point, a \emph{Palais slice}, such that the group orbit of a slice forms a tube within which the orbit trivializes: there are generalized ``flow-box" coordinates. In this paper, we are primarily concerned with one-parameter symmetry groups, though our results can be extended to the case of multi-dimensional, Abelian groups. Moreover, we will show that the skew-product \Eq{skew} can be globally valid on $M$, or at least on an open, dense subset.

In the classical theory of flows with Lie symmetry groups, a flow on a manifold $M$ with symmetry group $G$ that acts properly on $M$, can be reduced to a flow on the space of group orbits, $M/G$. When the action of $G$ is free, then the group orbit space is a manifold; alternatively $M/G$ is an ``orbifold" and the reduction has singularities.

By contrast, in our reduction procedure, there is no need to assume that the action of the symmetry group is proper or free; it thus circumvents some of the problems with singular reduction. A number of examples are given in \Sec{Examples}.

One motivation of our study is to extend some results from the setting of Hamiltonian vector fields or symplectic maps to the setting of divergence-free vector fields or volume-preserving maps. Indeed, whenever a dynamical system falls into a particular structural class (symplectic, volume preserving,...), the question arises whether the reduction by appropriate symmetries can be performed so that the structure is preserved.

We will show in \Sec{VP} that when reduction by lifting is applied to volume-preserving maps, then the reduced map $k$ of \Eq{skew} is also volume preserving with respect to a natural volume form on $\Sigma$. In particular, in the three-dimensional, volume-preserving case, the reduced map $k:\Sigma\to\Sigma$ is symplectic.

As we recall in \Sec{Invariant}, Noether's theorem implies that whenever a Hamiltonian flow has a Hamiltonian symmetry there is an invariant, and conversely, that every invariant generates a Hamiltonian vector field that is a symmetry. This result also holds for symplectic twist maps with a Lagrangian generating function \cite{Logan73, Maeda81, Wendlandt97, Mansfield06}.

We will generalize a result of Bazzani \cite{Bazzani88} to show that, with the addition of a recurrence condition, Noether's theorem also applies more generally to symplectic maps. However, it is important to note that symmetries do not generally lead to invariants nor do invariants necessarily give rise to symmetries.

In \Sec{Comparison} we compare our procedure with two standard reduction procedures, using as an example a four-dimensional symplectic map with rotational symmetry.  We discuss the  advantages and shortcomings of reduction by lifting.

\section{Symmetry Reduction by Lifting}\label{sec:mapSymmetry}
In this section we investigate the conditions under which a map or flow that has a symmetry can be written in the skew-product form \Eq{skew}. Whenever this is possible,
the process reduces the effective dimension of the dynamics by one. To accomplish this reduction, we suppose that the symmetry vector field generates a complete flow that admits a global Poincar\'{e} section. We recall below the relationship between the existence of cross sections, covering spaces and fundamental groups and show how to use it to obtain the symmetry reduction.

\subsection{Invariants and Symmetries}\label{sec:notation}

We start by recalling some basic notation (see also \App{notation}) and facts about symmetries and invariants of vector fields and maps.
We suppose that $M$ is a (not necessarily compact) $n$-dimensional, path-connected manifold and denote the set of $C^1$ vector fields on $M$ by $\VectorField(M)$ and the set of $k$-forms on $M$ by $\Forms^k(M)$. The flow, $\vphi_{t}(x)$, of a vector field $X \in \VectorField(M)$ is the solution of the initial value problem
$\dot \vphi_{t}(x) = X(\vphi_{t}(x))$ with $\vphi_{0}(x)=x$. In this paper we will assume that each vector field has a complete flow; that is, $\vphi_t: M \to M $ is a diffeomorphism for all $t \in \bR$. In this case, $\vphi$ is an action of the one-dimensional Lie group $\bR$ on $M$ with composition rule $\vphi_t\circ \vphi_s=\vphi_{s+t}$.

An \emph{invariant} for a vector field $X$ is a function $I: M \to \bR$ such that
$
	L_X(I) = i_{X}dI = 0 \;,
$
where $L_X$ is the Lie derivative (see \App{notation}). Equivalently $I$ is invariant when $\vphi^*_t I = I$.
Similarly $I$ is an invariant for a map $f: M \to M$ if
\beq{MapInvariant}
	f^* I := I\circ f = I \;.
\eeq

A \emph{continuous symmetry} of a vector field $X$ is a vector field $Y$ that commutes with $X$, $[Y,X] = 0$, or equivalently, $L_Y X = 0$.
This implies that the flows of $X$ and $Y$, say $\vphi$ and $\psi$ respectively, commute \cite{Arnold78}:
\[
	\vphi_t \circ \psi_s = \psi_s \circ \vphi_t \;.
\]
Similarly, a vector field $Y$ is a symmetry of a map $f$ if
\beq{MapSymmetry}
	f^*Y = Y \;,
\eeq
where $f^*Y := ( Df(x) )^{-1} Y(f(x))$.
This means that the transformation $f$ leaves the differential equation $\dot y = Y(y)$ invariant. Since $y$ and $z=f(y)$ satisfy the same system of differential equations, the symmetry extends to the Lie group generated by the flow of $Y$:
\[
	\psi_t \circ f = f \circ \psi_t \;.
\]
Thus $f$ is an equivariant transformation of the flow.

Clearly, the collection of vector fields that are symmetries of a map form a Lie algebra. For example if $X,Y\in\VectorField(M)$ are symmetries of $f$, then $f^*[X,Y]=[f^*X,f^*Y]=[X,Y]$, so $[X,Y]$ is also a symmetry.

\subsection{Global Poincar\'{e} Sections}

We will assume $f:M \to M$ is a diffeomorphism with a symmetry vector field $Y$, \Eq{MapSymmetry}. Standing assumptions are that $M$ is an $n$-dimensional, path-connected manifold, and that $Y$ has a complete flow with a global Poincar\'{e}  section.

Recall that a relatively closed, codimension-one submanifold $\Sigma \hookrightarrow M$
is a \emph{Poincar\'{e} section} or cross section of a flow $\psi$ if it is transverse to the flow, and is a \emph{global Poincar\'{e} section} of a complete flow if every orbit of the flow has both forward and backward transversal intersections with $\Sigma$, see the left pane of \Fig{cross}. Recall that $\Sigma$ is relatively closed in $M$ if, given a sequence in $\Sigma$ that converges in $M$, the sequence also
converges in $\Sigma$.

\InsertFig{cross.pdf}{A global Poincar\'{e} section $\Sigma$ has two representations that are equivalent by \Th{covering}:
	\emph{a}) as an embedding in the original manifold $M$, and
	\emph{b}) lifted to a cover $C = \Sigma \times \bR$.
In both cases, an orbit of the flow $\psi$ is drawn.}{cross}{3.5in}

When $\Sigma$ is a global Poincar\'{e} section for $\psi$, the first return time to $\Sigma$, $T: \Sigma \to \bR^+$, is the smallest positive number such that for, each $\sigma \in \Sigma$, $\psi_{T(\sigma)}(\sigma)\in\Sigma$. In fact, $T$ is continuous.
The first return map $r_\psi: \Sigma \to \Sigma$, also known as the Poincar\'{e} map, is defined by
\beq{PoincareMap}
	r_\psi(\sigma) := \psi_{T(\sigma)}(\sigma) \;;
\eeq
it is a diffeomorphism. The sequence of return times $T_n$, $n \in \bZ$, is defined so that $T_0(\sigma)=0$,
\[
	T_{n+1}(\sigma):=T_n(\sigma)+T(r_\psi^n(\sigma)) \;, \quad n \ge 0 \;,
\]
and $T_{-n}(\sigma)=-T_n(r_\psi^{-n}(\sigma))$.

Using this notation, the iterates of the first return map are
\beq{PoincareIterates}
	r_\psi^n(\sigma)=\psi_{T_n(\sigma)}(\sigma) \;,
\eeq
for all $n \in \bZ$. Note that the sequence of return times is strictly increasing with $n$, and must be unbounded when the section $\Sigma$ is relatively closed in $M$.
Indeed, if for some $\sigma_0$ the sequence $T_n(\sigma_0)$ were bounded then, since it is increasing, it would converge. Consequently, the difference $T_{n+1}(\sigma_0)-T_n(\sigma_0) \to 0$ and therefore by definition
\[
	T(r_\psi^n(\sigma_0))\to 0
\]
as $n\to\infty$. Thus $r_\psi^n(\sigma_0)=\psi_{T_n(\sigma_0)}(\sigma_0)\to\sigma^*$ would also converge, and since $\Sigma$ is relatively closed, we could conclude that $\sigma^*\in\Sigma$. By continuity, $T(\sigma^*)=0$, but this contradicts the fact that the return times are always positive. A similar argument shows that $T_n(\sigma)$ is unbounded when $n\to -\infty$.

A global Poincar\'{e} section $\Sigma$ can be viewed both as a submanifold of $M$, and---as illustrated in the right pane of \Fig{cross}---as the base of a covering space for $M$.
Recall that a manifold $C$ is a cover of $M$ if there is a differentiable, surjective function $p:C\to M$ such that each $m\in M$ has a neighborhood $\cU$ for which $p^{-1}(\cU)$ is the disjoint union of a countable number of open sets in $C$, each of which is diffeomorphic, via $p$, to $\cU$. The covering map $p$ is necessarily a local diffeomorphism. A covering space is a fiber bundle with a discrete fiber.

We will often think of $\Sigma$ both as a submanifold of $M$ and a space in its own right. Technically, we define an inclusion map $\iota:\Sigma\hookrightarrow M$ to express the embedding, but when there is little risk of confusion, we will let $\Sigma \subset M$ denote $\iota(\Sigma)$.

A fundamental tool for the rest of the paper is the following theorem of Schwartzman that shows how global Poincar\'{e} sections are related to covering spaces.

\begin{thm}[\cite{Schwartzman62}]\label{thm:covering}
A relatively closed submanifold $\Sigma$ is a global Poincar\'{e} section of a flow $\psi$ on a manifold $M$
if and only if the map $p:\Sigma\times\bR\to M$ defined by
\beq{covering}
	p(\sigma,\tau)=\psi_\tau(\sigma)
\eeq
is a smooth cover of $M$ with an infinite, cyclic group of deck transformations.
\end{thm}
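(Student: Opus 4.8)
The plan is to prove the two implications separately, keeping two auxiliary objects in view throughout. First, the flow lifts: $\tilde\psi_t(\sigma,\tau):=(\sigma,\tau+t)$ is a complete flow on $\Sigma\times\bR$ with $p\circ\tilde\psi_t=\psi_t\circ p$, so $p$ carries the line $\{\sigma\}\times\bR$ onto the $\psi$-orbit of $\sigma$. Second, the natural candidate generator of the deck group is $d(\sigma,\tau):=(r_\psi(\sigma),\tau-T(\sigma))$: a one-line check gives $p\circ d=p$, $d$ is a diffeomorphism with inverse $(\sigma,\tau)\mapsto(r_\psi^{-1}(\sigma),\tau+T(r_\psi^{-1}(\sigma)))$, and since the return times are positive and the $T_n$ strictly monotone, $d$ is fixed-point free with $\langle d\rangle\cong\bZ$. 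One also verifies that $p(\sigma,\tau)=p(\sigma',\tau')$ holds precisely when $(\sigma',\tau')=d^k(\sigma,\tau)$ for some $k\in\bZ$, i.e.\ the fibers of $p$ are exactly the $\langle d\rangle$-orbits.

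\emph{($\Rightarrow$)} Assume $\Sigma$ is a global Poincar\'{e} section. Surjectivity of $p$ is immediate. Transversality of $\Sigma$ together with the fact that $D\psi_\tau$ sends the flow direction at $\sigma$ to the flow direction at $\psi_\tau(\sigma)$ makes $Dp_{(\sigma,\tau)}$ an isomorphism, so $p$ is a local diffeomorphism. The substance is producing evenly covered neighborhoods. Given $m\in M$, the intertwining relation lets me flow along $\psi$ and reduce to $m=\sigma_0\in\Sigma$. Choose a neighborhood $V\subset\Sigma$ of $\sigma_0$ with compact closure and, by continuity and positivity of $T$, an $\eps>0$ with $2\eps<\min_{\overline V}T$; then $\cU:=\{\psi_s(v):v\in V,\ |s|<\eps\}$ is an open flow box about $\sigma_0$. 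With $W_n:=\{(\sigma,\tau): r_\psi^n(\sigma)\in V,\ |\tau-T_n(\sigma)|<\eps\}$, I would show $p^{-1}(\cU)=\bigsqcup_{n\in\bZ}W_n$, that the $W_n$ are pairwise disjoint, and that each $p|_{W_n}$ is a bijection onto $\cU$ (inverse $\psi_s(v)\mapsto(r_\psi^{-n}(v),T_n(r_\psi^{-n}(v))+s)$), hence --- being a local diffeomorphism --- a diffeomorphism. Finally the fiber over $\sigma\in\Sigma$ is $\{(r_\psi^k(\sigma),-T_k(\sigma)):k\in\bZ\}$, countably infinite (distinct since the $T_k$ are distinct), and by the discussion of $d$ above $p$ factors through a diffeomorphism $(\Sigma\times\bR)/\langle d\rangle\to M$, so $\langle d\rangle\cong\bZ$ is the group of deck transformations.

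\emph{($\Leftarrow$)} Assume $p$ is a smooth cover with infinite cyclic deck group generated by $g$. A covering map is a local diffeomorphism, and on $\Sigma\times\{0\}$ the map $p$ is just the inclusion $\iota$, so surjectivity of $Dp$ forces $Y(\sigma)\notin T_\sigma\Sigma$: transversality. Surjectivity of $p$ says every $\psi$-orbit meets $\Sigma$. For forward and backward intersections I would first note that $g$ commutes with $\tilde\psi$ (both $g\tilde\psi_t$ and $\tilde\psi_tg$ are lifts of $\psi_t\circ p$ agreeing at $t=0$), hence $g$ permutes the vertical lines and has the skew form $g(\sigma,\tau)=(a(\sigma),\tau+b(\sigma))$ with $a$ a diffeomorphism of $\Sigma$; since distinct sheets of a cover are disjoint, $g(\Sigma\times\{0\})\cap(\Sigma\times\{0\})=\emptyset$, so $b$ is nowhere zero and of constant sign. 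Then, for $m=p(\sigma,\tau_0)$, the $\psi$-orbit of $m$ meets $\Sigma$ exactly at times $\{h_k-\tau_0:k\in\bZ\}$, where $h_k$ is the single height at which the vertical line at $\sigma$ meets $g^k(\Sigma\times\{0\})$, and $h_{k+1}-h_k$ is a value of $b$, hence nonzero of fixed sign; so $(h_k)$ is strictly monotone. If it were bounded, the points $(\sigma,h_k)$ would converge inside $p^{-1}(\Sigma)$ --- closed because $\Sigma$ is relatively closed --- to some $(\sigma,h_*)$ with $h_*$ again a meeting height, contradicting strict monotonicity. So $(h_k)$ is unbounded above and below, yielding forward and backward transversal intersections.

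The step I expect to be the main obstacle is the disjointness of the sheets $W_n$ and the exactness of $p^{-1}(\cU)=\bigsqcup_n W_n$ in the forward direction: these rest on the quantitative gap estimate $T_n(\sigma)-T_m(\sigma)\ge\min_{\overline V}T>2\eps$ for $n>m$ with $r_\psi^m(\sigma)\in V$, which is exactly why $V$ must have compact closure and why positivity and continuity of $T$ are needed. A secondary point of care is connectedness: identifying the full deck group with $\langle d\rangle$ (and getting the constant sign of $b$) uses that $\Sigma$ is connected; if $\Sigma$ is disconnected, $\langle d\rangle$ remains a distinguished infinite cyclic group of deck transformations acting simply transitively on the fibers, but the full automorphism group of the cover may be larger, and the theorem should be read as asserting the former.
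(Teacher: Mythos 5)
The paper itself contains no proof of this theorem: it is quoted from Schwartzman \cite{Schwartzman62}, and the only related material in the text is the unboundedness of the return-time sequence $T_n$ (argued just before the statement, using relative closedness) and \Lem{Fiber}, which identifies the fibers of $p$ with the orbits of $\Delta$ --- precisely the two facts you dispatch with ``one also verifies'' in your setup. So there is no internal proof to compare against. Judged on its own, your forward direction is correct and is the standard argument: the flow box $\cU$ with $2\eps<\min_{\overline V}T$, the sheets $W_n$, disjointness via the gap estimate you single out, and the identification of the deck group with $\langle d\rangle$ (your connectedness caveat about $\Sigma$ is apt). One thing to make explicit: the exactness $p^{-1}(\cU)=\bigsqcup_n W_n$ also requires that every time at which an orbit of a point of $\Sigma$ lies in $\Sigma$ is one of the $T_n$, which needs the two-sided unboundedness of $T_n$ and hence relative closedness --- not only the gap estimate you flag as the main obstacle.

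The step that does not stand as written is in the converse. You assert that the $\psi$-orbit of $m$ meets $\Sigma$ \emph{exactly} at the times $h_k-\tau_0$, i.e.\ that $p^{-1}(\Sigma)=\bigcup_k g^k(\Sigma\times\{0\})$. That is equivalent to the deck group acting transitively on fibers, i.e.\ to the cover being regular, which is not among the hypotheses (having an infinite cyclic group of deck transformations, whether or not it is the full deck group, does not by itself make a cover regular), and your closing contradiction ``contradicting strict monotonicity'' leans on that ``exactly'': a strictly monotone sequence of meeting heights could in principle converge to a further meeting height unless meetings are known to be isolated. The repair is easy and stays inside your argument: you only need the containment, not equality, of $\{h_k\}$ in the set of meeting heights, and the contradiction should come from transversality rather than monotonicity. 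Since $p$ is a local diffeomorphism and $Y$ is nowhere tangent to $\Sigma$ (your first observation in this direction), $p^{-1}(\Sigma)$ is a submanifold transverse to every vertical line $\{\sigma\}\times\bR$, so its intersection with that line is locally isolated; hence the bounded monotone sequence $(\sigma,h_k)$, lying in the closed set $p^{-1}(\Sigma)$ and converging to $(\sigma,h_*)$ with $h_k\neq h_*$, is impossible, and $(h_k)$ is unbounded in both directions. With that substitution the converse closes and the whole proposal is sound.
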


Recall that a deck transformation of $p$ is a map $\Delta:C\to C$ on the covering space such that $p \circ \Delta = p$; that is, deck transformations are the lifts of the identity \cite{Dieck08}.
For the global Poincar\'{e} section $\Sigma$, is easy to verify that
\begin{equation}\label{eq:Delta}
	\Delta(\sigma,\tau)=(r_\psi(\sigma), \tau-T(\sigma)) \;,
\end{equation}
is a deck transformation on $\Sigma \times \bR$. Taking into account \Eq{PoincareIterates}, the iterates of $\Delta$ satisfy
\begin{equation}\label{eq:deck}
	\Delta^n(\sigma,\tau)=(r_\psi^n(\sigma), \tau-T_n(\sigma))
\end{equation}
and are also deck transformations. Hence, the collection
$\{ \Delta^n: n\in \bZ \}$, is the cyclic group mentioned in \Th{covering}.
We next note that the orbits of $\Delta$ correspond to the fibers of the cover.
\begin{lem}\label{lem:Fiber}
For each $x \in M$, the fiber, $p^{-1}(x)$, of the covering map $p$ is an orbit of $\Delta$. In other words, if $p(\sigma_0,\tau_0) = x$, then
\[
	p^{-1}(x)=\orbit_\Delta(\sigma_0,\tau_0)
		:= \{\Delta^n(\sigma_0,\tau_0): n\in\bZ\} \;.
\]
\end{lem}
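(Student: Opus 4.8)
The plan is to prove the two inclusions of the claimed equality separately, translating statements about the fiber $p^{-1}(x)$ into statements about the $\psi$-orbit of a point of $\Sigma$. Fix $(\sigma_0,\tau_0)$ with $p(\sigma_0,\tau_0)=x$. The inclusion $\orbit_\Delta(\sigma_0,\tau_0)\subseteq p^{-1}(x)$ is immediate from \Th{covering}: since each $\Delta^n$ is a deck transformation, $p\circ\Delta^n=p$, and hence $p(\Delta^n(\sigma_0,\tau_0))=x$ for every $n\in\bZ$.

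For the reverse inclusion, suppose $(\sigma_1,\tau_1)$ also satisfies $p(\sigma_1,\tau_1)=x$, that is, $\psi_{\tau_1}(\sigma_1)=\psi_{\tau_0}(\sigma_0)$. Applying $\psi_{-\tau_1}$ and using the group law $\psi_t\circ\psi_s=\psi_{s+t}$ gives $\sigma_1=\psi_{\tau_0-\tau_1}(\sigma_0)$. Since both $\sigma_0$ and $\sigma_1$ lie in $\Sigma$, the number $t_*:=\tau_0-\tau_1$ is a time at which the forward or backward $\psi$-orbit of $\sigma_0$ meets $\Sigma$. The crux of the argument is then the fact that, for $\sigma_0\in\Sigma$, the set of such hitting times is exactly $\{T_n(\sigma_0):n\in\bZ\}$. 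One containment is \Eq{PoincareIterates}. For the other, given $t$ with $\psi_t(\sigma_0)\in\Sigma$, use that the $T_n(\sigma_0)$ are strictly increasing and---by the discussion above---unbounded both as $n\to+\infty$ and $n\to-\infty$, so there is a unique $n\in\bZ$ with $T_n(\sigma_0)\le t<T_{n+1}(\sigma_0)$; then $\psi_{t-T_n(\sigma_0)}\bigl(r_\psi^n(\sigma_0)\bigr)=\psi_t(\sigma_0)\in\Sigma$ with $0\le t-T_n(\sigma_0)<T(r_\psi^n(\sigma_0))$, and minimality of the first return time forces $t=T_n(\sigma_0)$. Applying this with $t=t_*$ produces an $n\in\bZ$ with $\tau_0-\tau_1=T_n(\sigma_0)$; hence $\sigma_1=\psi_{T_n(\sigma_0)}(\sigma_0)=r_\psi^n(\sigma_0)$ by \Eq{PoincareIterates} and $\tau_1=\tau_0-T_n(\sigma_0)$, so $(\sigma_1,\tau_1)=\Delta^n(\sigma_0,\tau_0)$ by \Eq{deck}, which is precisely the statement $(\sigma_1,\tau_1)\in\orbit_\Delta(\sigma_0,\tau_0)$.

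The only step that needs genuine care is the characterization of the hitting-time set; the rest is bookkeeping with the flow's group law and the formulas \Eq{Delta} and \Eq{deck}. There one must combine three ingredients correctly: the global Poincar\'{e} section hypothesis (so the $\psi$-orbit of $\sigma_0$ has both forward and backward returns, making every $T_n(\sigma_0)$ well defined), the minimality of $T$ (so no intersection with $\Sigma$ is skipped between consecutive returns $T_n$ and $T_{n+1}$), and the unboundedness of $\{T_n(\sigma_0):n\in\bZ\}$ in both directions (so the half-open intervals $[T_n(\sigma_0),T_{n+1}(\sigma_0))$ exhaust $\bR$). One could instead deduce the result from general covering-space theory once the cover of \Th{covering} is known to be normal, but the direct flow argument above is shorter and keeps the proof self-contained.
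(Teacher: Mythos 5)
Your proof is correct and follows essentially the same route as the paper: the easy inclusion via the deck-transformation property, and the reverse inclusion by placing $\tau_0-\tau_1$ in a half-open interval $[T_n(\sigma_0),T_{n+1}(\sigma_0))$ and invoking minimality of the first return time to force equality with $T_n(\sigma_0)$. The only cosmetic difference is that the paper reduces to $\tau_0>\tau'$ "without loss of generality" and works with $n\in\bN$, whereas you handle both signs at once using two-sided unboundedness of the return times; both are fine.
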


\begin{proof}
For any point $(\sigma',\tau') \in \orbit_\Delta(\sigma_0,\tau_0)$, there is an $n$ such that $(\sigma',\tau') = \Delta^n(\sigma_0,\tau_0)$, and since $\Delta^n$ is a deck transformation, $p(\sigma',\tau') = p \circ \Delta^n(\sigma_0,\tau_0) = p(\sigma_0,\tau_0) = x$.
Conversely, suppose that $(\sigma^\prime,\tau^\prime)\in p^{-1}(x)$, i.e.,
\(
	\psi_{\tau^\prime}(\sigma^\prime)=x=\psi_{\tau_0}(\sigma_0),
\)
so that $\sigma'= \psi_{\tau_0-\tau'}(\sigma_0) \in \Sigma$.
Assuming, without loss of generality, that $\tau_0>\tau^\prime$, then since the sequence $T_n(\sigma_0)$ is strictly increasing and unbounded, there must be an $n\in \bN$ so that
\[
T_n(\sigma_0)\leq \tau_0-\tau'<T_{n+1}(\sigma_0) \;.
\]
If we let $\sigma_n=r_\psi^n(\sigma_0)$ and $\kappa=\tau_0-\tau' -T_n(\sigma_0)$,
then $0\leq \kappa<T(\sigma_n)$ and $\sigma'=\psi_\kappa(\sigma_n)$. By definition of
the first return time, the only possibility is that $\kappa=0$.

Therefore, $\tau_0-\tau' = T_n(\sigma_0)$. Consequently, $\sigma' = r^n_\psi(\sigma_0)$
and thus $(\sigma',\tau') = \Delta^n(\sigma_0,\tau_0)$.
\end{proof}

\subsection{Symmetry Reduction by Lifting}
We now ask whether it is possible to lift $f$ to a cover $C$ in which the map takes the skew-product form \Eq{skew}. Recall that a lift of $f$ is a map
$F:C\to C$ such that
\beq{liftDef}
	f\circ p=p\circ F \;.
\eeq
In particular, any deck transformation is the lift of the identity on the base.
Lifts are not unique, but all lifts are the same up to deck-transformations:
if $G$ is another lift of $f$ then $ G  = \Delta^m F$ for some integer $m$.

Suppose that $x_0,y_0\in C$ are two points in the cover such that $f(p(x_0))=p(y_0)$.
A necessary and sufficient condition \cite{Massey} for the existence of a lift is that
\begin{equation}\label{eq:lift}
	f_*p_*\fun\left(C,x_0\right)\subseteq p_*\fun\left(C,y_0\right) \;,
\end{equation}
where $\fun(C,x_0)$ is the fundamental group of $C$ based at $x_0$. 
In addition, if the lift exists, it can be taken to satisfy $F(x_0)=y_0$.\footnote
{Recall that the fundamental group of a topological space $M$ depends on the choice of a base point $\xi_0$. However, the choice of base point is irrelevant provided the space $M$ is path-connected.}
In other words, each loop in $M$ that is a projection of a noncontractable loop in $C$ must map, under $f$, to another loop in the same projection. This is a nontrivial requirement because $p_*(\fun(C,x_0))$ is necessarily a subgroup of $\fun(M,x_0)$. Note, however that the requirement would be trivially satisfied if the
cover were simply connected, since in that case $\fun(C,x_0)$ would be trivial.

\InsertFig{lift}{Failure of \Eq{lift} in an attempt to construct a lift of the cat map to the cylinder.}{lift}{3.8in}

The necessity of the condition \Eq{lift} is easy to illustrate. Consider the example of Arnold's cat map on the two-torus $M=\bR^2/\bZ^2$: $f(x,y)=(2x+y,x+y)$. Does $f$ have a lift to the cylinder $C=\bR\times\bS^1$ with the natural covering map, $p(z,\tau) = (z \mod 1, \tau)$?
The fundamental group $\fun(M,0)$ has two generators, say $\gamma_1,\gamma_2$; however, $\fun(C,0)$ has only one generator, say $\delta$. Since $p_*(\delta)=\gamma_2$, and $f_*(\gamma_2)=\gamma_1+\gamma_2$, the condition
\Eq{lift} is not satisfied. The geometric reason this fails is illustrated in \Fig{lift}.
Any loop $f\circ p\circ\delta$ has the homotopy type of $\gamma_1+\gamma_2$. However,
any lift of $f\circ p\circ\delta$ can not be a loop on $C$.

We will now give conditions for existence of a lift of $f$ on the cover $C=\Sigma\times\bR$ with $p:C\to M$ as in \Eq{covering}. Let $\Psi$ denote the trivial flow on $\Sigma\times\bR$,
\beq{trivial}
	\Psi_t(\sigma,\tau)=(\sigma,\tau+t) \;.
\eeq
Clearly, $\Psi$ commutes with $\Delta$ and
\(
	p\circ \Psi_t(\sigma,\tau)=p(\sigma,\tau+t)
		=\psi_{\tau+t}(\sigma)=\psi_t\circ p(\sigma,\tau)
\)
so that $\Psi$ is a lift of $\psi$, in the sense that each $\Psi_t$ is a lift of $\psi_t$, for all $t$.

\begin{thm}[Symmetry Reduction]\label{thm:Reduction}
Suppose that a diffeomorphism $f:M\to M$ has a symmetry $\psi_t$ with global Poincar\'{e} section $\Sigma$
that is relatively closed in $M$ and let $p:\Sigma\times \bR\to M$ be the induced cover \Eq{covering}.
Let $\sigma_0\in\Sigma$ and $(\bar{\sigma},\bar{\tau})\in p^{-1}(f(\sigma_0))$.

Then the following statements are equivalent.
\begin{enumerate}
  \item\label{part1} %
  There exists a lift $F:\Sigma \times \bR\to \Sigma \times \bR$ of $f$ to the cover of the form
\beq{skewProduct}
 F: \left\{ \begin{array}{l}
			\sigma' = k(\sigma) \;, \\
			\tau' = \tau + \rot(\sigma) \;,
		 \end{array} \right.
\eeq
where $k$ is a diffeomorphism of $\Sigma$, and $k(\sigma_0)=\bar{\sigma}$, $\rot(\sigma_0)=\bar{\tau}$.
\item\label{part2} If  $\iota:\Sigma\hookrightarrow M$ is the standard inclusion, then 
 \beq{groupSubset}
	f_*\iota_*\fun\left(\Sigma,\sigma_0\right)\subseteq
            p_*\fun\left(\Sigma\times\bR,(\bar{\sigma},\bar{\tau})\right) \;.
\eeq
\end{enumerate}
\end{thm}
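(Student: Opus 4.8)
The plan is to prove the equivalence by going through a third, intermediate statement: the existence of a lift $F$ of $f$ to the cover $C=\Sigma\times\bR$ (not yet in skew-product form) that additionally satisfies $F(\sigma_0,0)=(\bar\sigma,\bar\tau)$. The general lifting criterion \Eq{lift} applied with base points $x_0=(\sigma_0,0)$ and $y_0=(\bar\sigma,\bar\tau)$ says such a lift exists if and only if $f_*p_*\fun(C,x_0)\subseteq p_*\fun(C,y_0)$. The key observation is that $p(\sigma,0)=\psi_0(\sigma)=\sigma=\iota(\sigma)$, so $p\circ j=\iota$ where $j:\Sigma\to\Sigma\times\bR$ is the inclusion $\sigma\mapsto(\sigma,0)$; moreover $j$ is a homotopy equivalence (the factor $\bR$ is contractible), so $j_*:\fun(\Sigma,\sigma_0)\to\fun(C,x_0)$ is an isomorphism. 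Hence $p_*\fun(C,x_0)=p_*j_*\fun(\Sigma,\sigma_0)=\iota_*\fun(\Sigma,\sigma_0)$, and the criterion \Eq{lift} becomes exactly \Eq{groupSubset}. This gives \ref{part2} $\Leftrightarrow$ (existence of a based lift). So the real content is upgrading an arbitrary based lift to one of skew-product form, and conversely noting that a skew-product lift is in particular a based lift.

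For the implication from a based lift to the skew-product form, I would start with any lift $F$ with $F(\sigma_0,0)=(\bar\sigma,\bar\tau)$ and show it automatically has the structure \Eq{skewProduct}. The mechanism is that $f$ commutes with the symmetry flow $\psi_t$, and $\Psi_t$ (the trivial flow \Eq{trivial}) is a lift of $\psi_t$; therefore both $F\circ\Psi_t$ and $\Psi_t\circ F$ are lifts of $f\circ\psi_t=\psi_t\circ f$. Since any two lifts differ by a deck transformation, for each $t$ there is an integer $m(t)$ with $F\circ\Psi_t=\Delta^{m(t)}\circ\Psi_t\circ F$; continuity in $t$ and $m(0)=0$ force $m(t)\equiv 0$, so $F$ commutes with $\Psi_t$ for all $t$. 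Writing $F(\sigma,\tau)=(a(\sigma,\tau),b(\sigma,\tau))$, commutation with $\Psi_t$ gives $a(\sigma,\tau+t)=a(\sigma,\tau)$ and $b(\sigma,\tau+t)=b(\sigma,\tau)+t$ for all $t$; setting $t=-\tau$ yields $a(\sigma,\tau)=a(\sigma,0)=:k(\sigma)$ and $b(\sigma,\tau)=\tau+b(\sigma,0)=:\tau+\rot(\sigma)$, which is exactly \Eq{skewProduct}. That $k$ is a diffeomorphism of $\Sigma$ follows because $F$ is a diffeomorphism of the cover (being a lift of the diffeomorphism $f$ along the local diffeomorphism $p$) and $k$ is the restriction of $F$ to the relatively closed leaf $\Sigma\times\{0\}$ followed by projection; the normalizations $k(\sigma_0)=\bar\sigma$, $\rot(\sigma_0)=\bar\tau$ are just $F(\sigma_0,0)=(\bar\sigma,\bar\tau)$.

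For the converse, \ref{part1} $\Rightarrow$ \ref{part2}: a skew-product $F$ as in \Eq{skewProduct} is a lift of $f$ with $F(\sigma_0,0)=(k(\sigma_0),\rot(\sigma_0))=(\bar\sigma,\bar\tau)$, so it is in particular a based lift, and the necessity half of the lifting criterion \Eq{lift} gives $f_*p_*\fun(C,x_0)\subseteq p_*\fun(C,y_0)$, which we already identified with \Eq{groupSubset}. The step I expect to be the main obstacle is the clean identification $p_*\fun(\Sigma\times\bR,(\sigma_0,0))=\iota_*\fun(\Sigma,\sigma_0)$ and the bookkeeping that makes the based point $y_0=(\bar\sigma,\bar\tau)$ work out — one has to be careful that the target base point in \Eq{groupSubset} is the specified lift of $f(\sigma_0)$, and that changing it to a deck-translate changes $F$ only by composing with $\Delta^m$, which preserves the skew-product form (it sends $\rot$ to $\rot$ composed appropriately and $k$ to $r_\psi^m\circ k$). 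The continuity argument forcing $m(t)\equiv0$ is routine but deserves a sentence: the deck group is discrete, $t\mapsto m(t)$ is continuous into $\bZ$, hence constant, and its value at $t=0$ is $0$.
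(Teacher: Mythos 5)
Your proposal is correct, but it reaches \Eq{skewProduct} by a genuinely different route in the \ref{part2})$\Rightarrow$\ref{part1}) direction. The paper never lifts $f$ on the whole cover: it applies the lifting criterion to the restricted map $f\circ\iota:\Sigma\to M$, producing $G:\Sigma\to\Sigma\times\bR$ with $p\circ G=f\circ\iota$ and $G(\sigma_0)=(\bar\sigma,\bar\tau)$, and then \emph{defines} $F(\sigma,\tau):=\Psi_\tau(G(\sigma))$, using the equivariance $f\circ\psi_\tau=\psi_\tau\circ f$ only to check that this $F$ is a lift; the skew-product form is built in by construction. You instead apply \Eq{lift} to $f\circ p$ at the base points $(\sigma_0,0)$ and $(\bar\sigma,\bar\tau)$ --- your identification $p_*\fun\left(\Sigma\times\bR,(\sigma_0,0)\right)=\iota_*\fun\left(\Sigma,\sigma_0\right)$ via $p\circ j=\iota$ is exactly the move the paper makes right after the theorem and again in \Cor{FlowReduction} --- obtain an a priori unstructured based lift $F$, and then prove a rigidity statement: $F\circ\Psi_t$ and $\Psi_t\circ F$ are lifts of the same map, hence differ by $\Delta^{m(t)}$, and discreteness of the deck group together with $m(0)=0$ forces $F$ to commute with $\Psi_t$, which gives \Eq{skewProduct}. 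Your route is a bit longer but yields more: every (suitably based) lift of $f$ is automatically of skew-product form, in the spirit of \Lem{homotopy}, whereas the paper's construction just exhibits one such lift more economically; the \ref{part1})$\Rightarrow$\ref{part2}) direction is essentially identical in both (you invoke the necessity half of \Eq{lift}, the paper argues with $G_*$ and injectivity of $p_*$). One shared soft spot, not a gap peculiar to you: neither you nor the paper really verifies that $k$ is a diffeomorphism of $\Sigma$, and your remark that $F$ is a diffeomorphism because it is a lift of the diffeomorphism $f$ itself needs a short deck-transformation argument (compose $F$ with a lift of $f^{-1}$ and use \Lem{Fiber}); since the paper is silent on this point, it does not count against your proof.
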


\begin{proof}
\ref{part1})$\Longrightarrow$\ref{part2})

By \Eq{skewProduct}, $F(\sigma,0)=(k(\sigma),\rot(\sigma))$, and since $F$ is a lift of $f$, $f(\sigma)=f\circ p(\sigma,0)=p(k(\sigma),\rot(\sigma))$.
Let $G:\Sigma\to\Sigma\times\bR$ be given by $G(\sigma):=(k(\sigma),\rot(\sigma))$.
Then $p\circ G=f\circ \iota$ and $G(\sigma_0)=(\bar{\sigma},\bar{\tau})$. Consequently $f_*\iota_*\fun\left(\Sigma,\sigma_0\right)=p_*G_*\fun\left(\Sigma,\sigma_0\right)$, and since $G_* \fun(\Sigma,\sigma_0) \subseteq \fun\left(\Sigma\times\bR,(\bar{\sigma},\bar{\tau})\right)$, and $p_*$ is injective, this directly implies \ref{part2}).

\ref{part2})$\Longrightarrow$\ref{part1})
Under the condition \Eq{groupSubset}, standard theorems of algebraic topology \cite{Massey} imply that there exists a function $G:\Sigma\to\Sigma\times\bR$ such that $p\circ G=f\circ \iota$. Define $k$ and $\omega$ by $G(\sigma)=(k(\sigma),\rot(\sigma))$. Since $f\circ \iota(\sigma_0)=p(\bar{\sigma},\bar{\tau})=f(\sigma_0)$, we can take $G(\sigma_0) = (k(\sigma_0), \omega(\sigma_0)) = (\bar{\sigma},\bar{\tau})$.

Now, for each $\tau\in\bR$, we define $F(\sigma,\tau):=\Psi_\tau(G(\sigma))$, where $\Psi$ is the trivial
flow \Eq{trivial}. We claim that $F$ is a lift of $f$, i.e., it satisfies \Eq{liftDef}. To show this, we notice that
\[
p\circ F(\sigma,\tau)=p\circ\Psi_\tau(G(\sigma))=\psi_\tau\circ p(G(\sigma))=
\psi_\tau(f(\sigma))=f(\psi_\tau(\sigma))=f\circ p(\sigma,\tau).
\]
\end{proof}

\InsertFig{SkewProduct}{Showing the ideas behind the skew-product form \Eq{skewProduct}.}{SkewProduct}{3.5in}

The geometry underlying the reduction \Eq{skewProduct} is illustrated in \Fig{SkewProduct}. Theorem \ref{thm:Reduction} can be seen to be equivalent to a commuting diagram.
Let $j:\Sigma\to \Sigma \times \bR$ be the inclusion given by $j(\sigma):=(\sigma,0)$.
When $\iota:\Sigma\hookrightarrow M$ is the standard inclusion,
then  $p\circ j=\iota$. 
Let $\Pi:\Sigma\times \bR\to \Sigma$ be the canonical projection.
Clearly, $\Pi\circ j=id_\Sigma$.
With this notation, \Th{Reduction} can be represented by the commutative diagram
\bsplit{commutative}
 \xymatrixcolsep{4pc}
 \xymatrix{
	&\Sigma\ar[r]^k&\Sigma\\
	\Sigma \ar[r]^j\ar@{_{(}->}[dr]_\iota\ar[ur]^{id_\Sigma}&
	\Sigma\times\bR \ar[u]_\Pi\ar	[r]^F\ar[d]_p&
	\Sigma\times\bR \ar[u]_\Pi\ar[d]_p\\
	&M\ar[r]^f&M}
\end{split}\eeq
In particular, the reduced map $k$ on $\Sigma$ can be written as
\beq{reducedMap}
	k=\Pi\circ F\circ j \;.
\eeq
Furthermore, the choice of  $\sigma_0$ and $(\bar{\sigma},\bar{\tau})$
such that $f(\sigma_0)=\psi_{\bar{\tau}}(\bar{\sigma})$, fixes the choice of
$k$ and $\rot$.

A simplification in  \Th{Reduction} occurs if $f$ has a fixed point at a point $\sigma_0 \in \Sigma$. Since the inclusion $j$ induces an isomorphism of the fundamental groups,
$j_*:\fun\left(\Sigma,\sigma_0\right)\to\fun\left(\Sigma\times\bR,(\sigma_0,0)\right)$,
\[
	\iota_*\fun\left(\Sigma,\sigma_0\right)
		=p_*j_*\fun\left(\Sigma,\sigma_0\right)
		=p_*\left(\fun\left(\Sigma\times\bR,(\sigma_0,0)\right)\right) \;.
\] 
Upon taking $\bar{\tau}=0$ and $\bar{\sigma}=\sigma_0$, the condition \Eq{groupSubset} then reduces to
\beq{newSubset}
	f_*\iota_*\fun\left(\Sigma,\sigma_0\right)	\subseteq
	\iota_*\fun\left(\Sigma,\sigma_0\right)\;.
\eeq

Finally, we note a convenient equation determining $k$ and $\omega$ is obtained by combining the definition of the lift \Eq{liftDef}, the form of the covering map \Eq{covering}, and the skew-product form \Eq{skewProduct} to obtain
$
	f( \psi_\tau( \sigma) ) = \psi_{\tau + \rot(\sigma)} (  k(\sigma) ) \;.
$
Since $f$ commutes with $\psi_\tau$ we find
\beq{kDefine}
	k(\sigma) = \psi_{-\rot(\sigma)} (  f(\sigma)  ) \in \Sigma \;.
\eeq
This determines $k$ and $\rot$ (up to the choice of lift) by the requirement that the right side must be in $\Sigma$.


\subsection{Deck symmetries and a homotopy invariant}
In this section we will show that lifts of maps with symmetries have a homotopy invariant.

\begin{lem}\label{lem:homotopy}
 Suppose that $f$ and its lift $F$ satisfy the hypotheses of \Th{Reduction}, and that $\Delta$ is the deck transformation defined in \Eq{Delta}. Then there exists $m\in \bZ$ such that
$F\circ \Delta=\Delta^m\circ F$ and
 the following identities are satisfied
\begin{align*}
k\circ r_\psi&=r_\psi^m\circ k \;,\\
\rot(\sigma)-\rot(r_\psi(\sigma))&=T_m(k(\sigma))-T(\sigma) \;.
\end{align*}
The integer $m$ is a homotopy invariant of the map $f$.
In addition, if $F$ is homotopic to the identity then $m=1$.
\end{lem}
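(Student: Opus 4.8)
The plan is to exploit the fact, recalled just before \Th{Reduction}, that any two lifts of $f$ differ by a deck transformation. Since $\Delta$ is a deck transformation of the cover $p$, the composition $\Delta^{-1}\circ F\circ\Delta$ is again a lift of $f$: indeed $p\circ(\Delta^{-1}\circ F\circ\Delta)=p\circ F\circ\Delta=f\circ p\circ\Delta=f\circ p$, using $p\circ\Delta=p$ and $p\circ F=f\circ p$. Hence there is a unique integer $m$ with $\Delta^{-1}\circ F\circ\Delta=\Delta^{m-1}\circ F$, i.e. $F\circ\Delta=\Delta^m\circ F$. Writing both sides out in coordinates using \Eq{skewProduct} and \Eq{deck} and comparing the $\Sigma$-component and the $\bR$-component yields the two displayed identities; this is the routine calculation I would not grind through.

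For the final two claims, first I would argue that $m$ is a homotopy invariant. Suppose $f_s$ is a homotopy of diffeomorphisms through maps each satisfying the hypotheses of \Th{Reduction} (so each $f_s$ has a lift $F_s$), and suppose the family of lifts $F_s$ can be chosen to depend continuously on $s$ — this is where a little care is needed, since one must lift the homotopy, not just each map separately; the homotopy lifting property of the covering $p$ provides exactly this, once a starting lift $F_0$ is fixed. Then $s\mapsto\Delta^{-1}\circ F_s\circ\Delta\circ F_s^{-1}$ is a continuous family of deck transformations; but the group of deck transformations is discrete (it is the infinite cyclic group $\{\Delta^n\}$ of \Th{covering}), so this family is constant, giving $m$ independent of $s$. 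Thus $m$ is invariant under homotopies of $f$ within the admissible class.

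Finally, for the case where $F$ is homotopic to the identity: the identity map of $\Sigma\times\bR$ is itself a lift of the identity map of $M$, and it obviously satisfies $\mathrm{id}\circ\Delta=\Delta^1\circ\mathrm{id}$, so its invariant is $m=1$. If $F\simeq\mathrm{id}$ through lifts of a homotopy from $f$ to $\mathrm{id}_M$, then by the homotopy invariance just established $m(F)=m(\mathrm{id})=1$. The main obstacle, and the step I would be most careful about, is the homotopy-lifting argument: one must ensure that a homotopy of the \emph{base} maps lifts to a homotopy of the \emph{lifts} in such a way that the conjugating deck transformation varies continuously, and that the relevant homotopies stay within the class of maps admitting lifts (equivalently, satisfying the subgroup condition \Eq{groupSubset}); discreteness of the deck group then does the rest. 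Everything else is bookkeeping in the skew-product coordinates.
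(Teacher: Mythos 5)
Your proposal is correct and follows essentially the paper's own route: you get $m$ by observing that $\Delta^{-1}\circ F\circ\Delta$ is again a lift of $f$ (the paper equivalently notes that $F\circ\Delta\circ F^{-1}$ projects to the identity and is therefore a power of $\Delta$), you obtain the two identities by comparing the $\Sigma$- and $\bR$-components of $F\circ\Delta$ and $\Delta^m\circ F$ using \Eq{skewProduct} and \Eq{deck}, exactly as the paper does, and you prove homotopy invariance by continuity plus discreteness of the deck group, which is the paper's terse ``continuity implies'' step spelled out via the homotopy lifting property of $p$. The only real divergence is the final claim: the paper argues that $\Delta^{m-1}$ is a deck transformation homotopic to the identity and hence trivial, whereas you deduce $m=1$ from $m(\mathrm{id}_{\Sigma\times\bR})=1$ together with invariance along a lifted homotopy; your fiber-preserving reading is precisely the situation in which the lemma is later used (\Cor{FlowReduction}), and it avoids the delicate fact that under free homotopy a nontrivial deck transformation can be homotopic to the identity (e.g.\ when $\Sigma$ is contractible), so this variation is harmless and arguably more careful. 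Two small technical remarks: liftability is automatically preserved along the homotopy (lifting the homotopy $f_s\circ p$ through $p$ starting from $F_0$ produces lifts of every $f_s$, and conjugation preserves the normal subgroup $p_*\fun(\Sigma\times\bR)$), so your worry about staying within the admissible class is moot; and since intermediate lifts $F_s$ need not be invertible, it is cleaner to write $\Delta^{-1}\circ F_s\circ\Delta=\Delta^{j_s}\circ F_s$ and argue $j_s$ is locally constant, rather than conjugating by $F_s^{-1}$.
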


\begin{proof}
Consider the map $G=F\circ \Delta\circ F^{-1}$. Clearly $p\circ G=p$, so $G$ has to be a deck transformation. This implies that there exists $m\in \bZ$ such that $G=\Delta^m$.

The integer $m$ is independent of the choice of lift. Indeed, suppose that $\tilde F$ is another lift of $f$, then since $p \circ F = p \circ \tilde F$, it follows from \Lem{Fiber} that there is an integer $j$ such that for any $(\sigma,\tau)$, $\tilde F(\sigma,\tau) = \Delta^j \circ F(\sigma,\tau)$. By continuity, $j$ is independent of $(\sigma,\tau)$, and consequently any two lifts differ at most by a deck transformation. Thus $\tilde F \circ \Delta = \Delta^{j+m} F = \Delta^m \tilde F$.

If $F$ is homotopic to the identity then $\Delta^{-1}\circ G$ is a deck transformation that is homotopic to the identity. The only possibility is that $\Delta^{-1}\circ G$ is the identity and therefore $m=1$. Similarly since the integer $m$ is independent of the lift, continuity implies any two homotopic maps will have the same value of $m$.

For the rest, it is enough to use \Eq{deck} in the skew product \Eq{skewProduct}.
\end{proof}

\subsection{Reduction of flows}\label{sec:FlowReduction}

Theorem \ref{thm:Reduction} also applies to a flow $\varphi$, with a symmetry $\psi$.
As in the theorem, we assume that $\psi$ has a global Poincar\'{e} section $\Sigma$.

\begin{cor}[Symmetry Reduction of Flows]\label{cor:FlowReduction}
Suppose that $\psi,\vphi$ are a pair of commuting flows on $M$ and that $\psi$ has a global Poincar\'{e} section $\Sigma$ that is relatively closed in $M$. Then there is a lift $\Phi_t$ of $\varphi_t$ to $\Sigma \times \bR$ of the form
\beq{FlowReduction}
	\Phi_t(\sigma,\tau)=(k_t(\sigma), \tau+\rot(\sigma,t)) \;.
\eeq
In addition, $\Phi_t$ can be chosen to be a flow on $\Sigma\times\bR$ and hence the functions $k_t$ and $\rot(\cdot,t)$ satisfy
\begin{align*}
    k_{t+s}&=k_t\circ k_s,\\
    \rot(\sigma,t+s)&=\rot(\sigma,t)+\rot(k_t(\sigma),s) \;.
\end{align*}
\end{cor}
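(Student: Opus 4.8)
The plan is to derive \Cor{FlowReduction} from \Th{Reduction} applied simultaneously to every map $\vphi_t$ in the flow, and then to show that the individual lifts can be assembled into a genuine one-parameter group. First I would observe that for each fixed $t$, the diffeomorphism $\vphi_t:M\to M$ has $\psi$ as a symmetry (the flows commute, so $\vphi_t^*Y=Y$), and $\psi$ has the global Poincar\'{e} section $\Sigma$ by hypothesis. The topological condition \Eq{groupSubset} is automatically satisfied because $\vphi_t$ is homotopic to $\vphi_0=\mathrm{id}_M$ through the flow itself, so $(\vphi_t)_*$ acts as the identity on $\fun(M,\cdot)$; hence $(\vphi_t)_*\iota_*\fun(\Sigma,\sigma_0)=\iota_*\fun(\Sigma,\sigma_0)\subseteq p_*\fun(\Sigma\times\bR,\cdot)$ (one must be mildly careful about the base point moving along the path $\vphi_\cdot(\sigma_0)$, but the standard change-of-basepoint isomorphism handles this). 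Therefore \Th{Reduction} gives, for each $t$, a lift $\Phi_t$ of $\vphi_t$ of the skew-product form \Eq{skewProduct}, and writing $\sigma'=k_t(\sigma)$, $\tau'=\tau+\rot(\sigma,t)$ yields \Eq{FlowReduction}.

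The second, more substantive step is to pin down the choice of lift so that $t\mapsto\Phi_t$ becomes a flow. The natural approach is to use the lift of the flow itself: $\Phi$ should be the lift of $\vphi$ in the sense of covering-space theory for the path $t\mapsto\vphi_t$, i.e.\ the unique continuous lift with $\Phi_0=\mathrm{id}_{\Sigma\times\bR}$, which exists by the homotopy lifting property since $\Sigma\times\bR$ covers $M$. Concretely, I would fix the lift by demanding $\Phi_0=\mathrm{id}$ and requiring $\Phi_t$ to depend continuously on $t$; continuity in $t$ of the integer ambiguity $\Delta^m$ forces $m$ to be locally constant, hence constant and equal to $0$ at $t=0$, so the family is uniquely determined. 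Then the group law follows: both $\Phi_{t+s}$ and $\Phi_t\circ\Phi_s$ are lifts of $\vphi_{t+s}=\vphi_t\circ\vphi_s$, agree at $t=s=0$, and depend continuously on $(t,s)$, so by uniqueness of lifts (\Lem{Fiber} shows two lifts differ by a fixed deck transformation, which must be the identity here since they agree somewhere) $\Phi_{t+s}=\Phi_t\circ\Phi_s$. Reading off the components of this identity against \Eq{FlowReduction} gives exactly $k_{t+s}=k_t\circ k_s$ and $\rot(\sigma,t+s)=\rot(\sigma,t)+\rot(k_t(\sigma),s)$; here it is convenient to use the explicit formula $k_t(\sigma)=\psi_{-\rot(\sigma,t)}(\vphi_t(\sigma))$ analogous to \Eq{kDefine}, and $\Phi_t=\Psi_{\cdot}\circ G_t$ with $G_t(\sigma)=(k_t(\sigma),\rot(\sigma,t))$ as in the proof of \Th{Reduction}.

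The main obstacle I anticipate is the continuity/smoothness of the family $t\mapsto(k_t,\rot(\cdot,t))$ and the verification that the lift obtained abstractly from the homotopy lifting property coincides with the skew-product lift of \Th{Reduction} for each $t$. In \Th{Reduction} the lift is constructed as $F=\Psi_\tau\circ G$ where $G$ is produced by a (non-canonical) lifting criterion, so one must check that the canonical choice $\Phi_0=\mathrm{id}$ propagated by path-lifting does land in this skew-product form for all $t$ and not merely up to an $\bR$-translation that varies wildly with $t$; this is where relative-closedness of $\Sigma$ and the structure \Eq{Delta} of the deck group are used, since they guarantee the return time $T$ and return map $r_\psi$ are continuous and hence the lifted objects vary continuously. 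Once continuity is in hand, everything else is a routine unwinding of the definitions, and the cocycle identity for $\rot$ is just the translation-component of the group law $\Phi_{t+s}=\Phi_t\circ\Phi_s$.
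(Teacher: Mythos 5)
Your argument is correct in substance, but it reaches the family of lifts by a different route than the paper. The paper never applies \Th{Reduction} time-by-time: it lifts the single map $q:\Sigma\times\bR\to M$, $q(\sigma,t)=\varphi_t(\sigma)$, through the cover $p$ of \Eq{covering}, verifying the lifting criterion in one stroke from $q\circ j=p\circ j=\iota$ together with the fact that $j_*$ is an isomorphism of fundamental groups; this produces $Q(\sigma,t)=(k(\sigma,t),\rot(\sigma,t))$ with $\varphi_t(\sigma)=\psi_{\rot(\sigma,t)}(k(\sigma,t))$, and hence a family $\Phi_t$ that is automatically jointly smooth in $(\sigma,t)$ --- no per-$t$ topological condition and no appeal to the homotopy lifting property are needed. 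Your route (apply \Th{Reduction} to each $\varphi_t$, with \Eq{groupSubset} verified because $\varphi_t$ is homotopic to the identity through the flow, then use the covering homotopy property to select a continuous family normalized by $\Phi_0=\mathrm{id}$) also works, and the worry you flag about the abstract lift agreeing with the skew-product one can be discharged cleanly: for any map $f$ commuting with $\psi$, \emph{every} lift is of skew-product form, since a lift $F$ and the map $(\sigma,\tau)\mapsto\Psi_\tau\bigl(F(\sigma,0)\bigr)$ are both lifts of $f$ that agree on $\Sigma\times\{0\}$ and hence coincide; so no ``wildly varying translation'' can occur, and relative closedness of $\Sigma$ is not what is needed here. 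From that point on, your normalization and group-law argument (deck transformations form a discrete group, the discrepancy is continuous and equals the identity at time zero) is essentially the paper's, which phrases it via $G_t=\Phi_{t+s}\circ\Phi_t^{-1}\circ\Phi_s^{-1}$ and uses \Lem{homotopy} in the normalization step. What the paper's device buys is economy: joint smoothness for free and no basepoint bookkeeping; what yours buys is a derivation from the already-proved mapping theorem plus standard covering-space machinery.
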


\begin{proof}


Just as in the definition \Eq{covering} of the cover $p$, let $q:\Sigma\times\bR\to M$ denote the map $q(\sigma,t)=\varphi_t(\sigma)$. We notice that
$q\circ j=p\circ j=\iota$. Hence for each $\sigma_0\in\Sigma$,
\[
	q_*\fun\left(\Sigma\times\bR,(\sigma_0,0)\right)=
		q_*j_*\fun\left(\Sigma,\sigma_0\right)=
		p_*j_*\fun\left(\Sigma,\sigma_0\right)=
		p_*\fun\left(\Sigma\times\bR,(\sigma_0,0)\right) \;.
\]
Consequently, there exists a map $Q:\Sigma\times\bR\to\Sigma\times\bR$ such that
$q= p \circ Q$. Equivalently, there exist functions $k(\sigma,t)$ and $\rot(\sigma,t)$ such that $Q(\sigma,t)=(k(\sigma,t),\rot(\sigma,t))$ and
$\varphi_t(\sigma)=\psi_{\rot(\sigma,t)}(k(\sigma,t))$. Denoting $k_t:=k(\cdot,t)$, define $\Phi_t$ by \Eq{FlowReduction}.
For each $t$, $\Phi_t$ is clearly a lift of $\varphi_t$, since
\[
	p\left(\Phi_t(\sigma,\tau)\right)=\psi_{\tau+\rot(\sigma,t)}(k(\sigma,t))=
		\psi_{\tau}\varphi_t(\sigma)=
		\varphi_t\psi_{\tau}(\sigma)=
		\varphi_t\left(p(\sigma,\tau)\right).
\]

We must show that it is possible to choose $\Phi_t$ so that $\Phi_0 = id_{\Sigma \times \bR}$,
and $\Phi_{t+s} = \Phi_t \circ \Phi_s$. Note first that $\Phi_0$ is a lift of the identity map
and so it must be a deck transformation. If $\Phi_0$ is not itself the identity, then we can
replace $\Phi_t$ by $\Phi_t \circ \Phi_0^{-1}$. By \Lem{homotopy}, since $\Phi_t$ is  a map
that is homotopic to the identity, it commutes with deck transformations; thus the new $\Phi_t$
is still a lift of $\vphi_t$, and satisfies $\Phi_0 = id_{\Sigma \times \bR}$.

Now, for any $s \in \bR$, let $G_t = \Phi_{t+s} \circ \Phi_t^{-1} \circ \Phi_s^{-1}$.
Note that $p \circ G_t = p$, so that $G_t$ is a deck transformation. Moreover, since
$G_0 = id_{\Sigma \times \bR}$, $G_t$ is homotopic to the identity, and thus must be the
identity. In this way we conclude that $\Phi_t$ satisfies the group property, and is a flow.
\end{proof}

The implication of \Cor{FlowReduction} is that $k_t$ is a flow on the section $\Sigma$.

\subsection{Symmetry with an Invariant}\label{sec:InvariantSymmetry}

Recall that generally the existence of a symmetry does not imply that of an associated invariant. However, it is possible that both the map and the symmetry do share an invariant. In that case, we can see that the reduced map has the same invariant.
\begin{lem}\label{lem:Invariant}
 Suppose $f:M\to M$ is a map with a symmetry $\psi$ and  the hypotheses
 of \Th{Reduction} hold. If $I:M\to \bR$ is an invariant of $f$ that is also an invariant of $\psi$, then $\iota^*I$ is an invariant of $k$.
\end{lem}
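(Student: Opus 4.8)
The plan is to show that the invariant $I$ on $M$, when pulled back to the section $\Sigma$ via the inclusion $\iota$, is preserved by the reduced map $k$. The key equation will be the formula \Eq{kDefine}, namely $k(\sigma)=\psi_{-\rot(\sigma)}(f(\sigma))$, which expresses $k$ as a composition of the original map $f$ restricted to $\Sigma$ followed by a flow of the symmetry. Since $I$ is invariant under both $f$ and $\psi$, each of these operations leaves $I$ unchanged, and the result should follow by a short computation.

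First I would write, for $\sigma\in\Sigma$,
\[
	(\iota^*I)(k(\sigma)) = I(k(\sigma)) = I\bigl(\psi_{-\rot(\sigma)}(f(\sigma))\bigr).
\]
Next, since $I$ is an invariant of the flow $\psi$, we have $\psi_t^*I=I$ for all $t$, so $I(\psi_{-\rot(\sigma)}(y))=I(y)$ with $y=f(\sigma)$; this gives $I(k(\sigma))=I(f(\sigma))$. Then, since $I$ is an invariant of $f$, \Eq{MapInvariant} gives $I(f(\sigma))=I(\sigma)$. Combining, $(\iota^*I)(k(\sigma))=I(\sigma)=(\iota^*I)(\sigma)$ for every $\sigma\in\Sigma$, which is exactly the statement that $\iota^*I=(\iota^*I)\circ k$, i.e.\ $\iota^*I$ is an invariant of $k$.

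One point I would be careful about is that \Eq{kDefine} was derived under the implicit identification of $\sigma\in\Sigma$ with $(\sigma,0)\in\Sigma\times\bR$ and uses that $k$ and $\rot$ come from a genuine lift $F$; I would either invoke \Eq{kDefine} directly (it is stated earlier and holds under the hypotheses of \Th{Reduction}) or, equivalently, argue from the commutative diagram \Eq{commutative}: $k=\Pi\circ F\circ j$, together with $f\circ p = p\circ F$ and $p(\sigma,\tau)=\psi_\tau(\sigma)$, to recover $p(F(\sigma,0)) = f(\sigma)$, hence $\psi_{\rot(\sigma)}(k(\sigma))=f(\sigma)$.

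I do not expect any real obstacle here: the lemma is essentially a one-line consequence of \Eq{kDefine} once one observes that invariance under $\psi$ kills the $\psi_{-\rot(\sigma)}$ factor. The only thing worth stating cleanly is why $I$ being $\psi$-invariant means $I\circ\psi_t=I$ for \emph{all} $t\in\bR$ (not just infinitesimally) — this is the equivalence $\vphi_t^*I=I\iff L_X I=0$ recalled in \Sec{notation}, applied to the generator of $\psi$ — so that in particular it holds for $t=-\rot(\sigma)$ at each $\sigma$.
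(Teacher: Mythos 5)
Your proof is correct and is essentially the paper's argument written pointwise: the paper establishes $F^*p^*I=p^*I$ and $p^*I=\Pi^*\iota^*I$ and then computes $k^*\iota^*I$ via $k=\Pi\circ F\circ j$, which is exactly your chain $I(k(\sigma))=I(\psi_{-\rot(\sigma)}(f(\sigma)))=I(f(\sigma))=I(\sigma)$ phrased with pullbacks instead of \Eq{kDefine}. Your cautionary remarks about why \Eq{kDefine} is legitimate and why $\psi$-invariance holds for all $t$ are apt but not a substantive departure.
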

\begin{proof}
 Notice that
 \(
 F^*p^*I=p^*f^*I=p^*I.
 \)
 Moreover, for any $(\sigma,\tau)\in\Sigma\times\bR$, \Eq{covering} implies
 \[
 	p^*I(\sigma,\tau)=I(\psi_\tau(\sigma))=I(\iota(\sigma))=
 		I(\iota(\Pi(\sigma,\tau)))=\Pi^*\iota^*I(\sigma,\tau) \;.
 \]
Using the expression \Eq{reducedMap} we can then conclude that
 \[
	k^*\iota^*I = j^*F^*\Pi^*\iota^*I
		        = j^*F^*p^*I = j^*p^*I
		        = \iota^*I \;.
 \]
\end{proof}
Examples that have both invariants and symmetries will be given in \Sec{Examples}.

\subsection{Circle actions}\label{sec:CircleAction}

We have assumed that the symmetry vector field $Y$ generates a flow $\psi_t$ that  corresponds to an action of the Lie group $\bR$ on the manifold $M$. However, in many cases, this action is periodic, and can thus be thought of as an action of the group $\bS^1$ on $M$. Typically, the temporal period of the orbit of $\psi_t$ will depend upon the point. Nevertheless, if $\psi_t$ has a global section $\Sigma$, and every orbit is periodic, then some iterate of the Poincar\'{e} map, $r_\psi$, \Eq{PoincareMap}, must be the identity, because the flow returns to the original point on the same circle. In this case there is a smallest positive integer $\ell$ such that $r_\psi^\ell= id_{\Sigma}$, for all $\sigma \in \Sigma$.

The existence of a global section for a circle action $\psi_t$ on $M$ is a
strong restriction on the topology of the manifold. Indeed, since the $\ell$-th iterate of the Poincar\'e map is the identity, a $\bZ_\ell \equiv \bZ \setminus \ell \bZ$ covering of $M$ is a trivial bundle over $\bS^1$, i.e.,\ $M = ( \Sigma \times \bS^1) / \bZ_\ell$.

In general the existence of a (free and proper) circle action gives $M$ the structure of a principal $\bS^1$-fiber bundle, though it need not be trivial. A classical example is provided by the Hopf fibration of $\bS^3$. At first it seems as if the  reduction  of \Th{Reduction} could not be done for a symmetry that is a non-trivial bundle that is not a discrete quotient of a direct product with $\bS^1$. However, one can often modify $M$ in a way that it acquires the necessary topology. This may be achieved by removing parts of $M$ that are invariant under the symmetry flow and under the map; examples are given \Sec{Examples} and \Sec{Comparison}. Some global topology may be lost because the modified $M$ will in general not be compact.

The question can also be inverted: how can one possibly achieve reduction {\em and}
reconstruction in the case where the symmetry group action induces a non-trivial fiber
bundle? The semi-direct product structure of \Eq{skewProduct} requires that the dynamics in the fiber are driven by the dynamics in the base. However, if it is not possible to globally define an origin in the $\bS^1$ fiber, this cannot be done.

\subsection{Orbit spaces and global Poincar\'{e} sections}\label{sec:OrbitSpaces}

Classical symmetry reduction begins with a smooth manifold $M$ (say without boundary, or even better a compact manifold) and a Lie group $G$ with a smooth and proper $G$-action $\Phi : G \times M \to M$. Two points on the same group orbit are now considered equivalent, and symmetry reduction means to study dynamics on the \emph{orbit space} $M/G$. In general this quotient is not a manifold, but just an orbifold. Each point $x \in M$ has an associated isotropy group, namely the set of elements of $G$ that fixes $x$,
\[
	G_x=\{g\in G: g(x)=x\} \;.
\]

The conjugacy class of $G_x$ is called the orbit type and gives a stratification of $M$ into types. This induces a stratification of the orbit space $M/G$ with a smooth structure \cite{Pflaum01}. When the group action is free (or slightly more generally, if the isotropy group is the same for each point) the orbit space is a manifold, and $M$ has the structure of a principal bundle over the orbit space. This is the standard setting for symmetry reduction. Reconstruction, i.e., the study of the dynamics in the fiber, may still be challenging globally because the fiber bundle may not be trivial: it may not have a global section.

Many interesting group actions are not free. When the Lie group is compact, singular reduction is still possible, where ``singular'' reminds us that the reduced space in this case is not a manifold. The standard tool is the Hilbert basis of invariants of the group action, which together with all relations and inequalities give an accurate description of the singular reduced space, see e.g.~\cite{Chossat02,Pflaum01}. We will use this approach in passing in some of the examples in \Sec{Examples} and \Sec{Comparison}.

The principal orbit bundle (or principal stratum) is the stratum that is open and dense in $M$ \cite{Pflaum01,Chossat00,Golubitsky88}. Moreover, its quotient by $G$ is connected \cite[Thm 4.3.2]{Pflaum01}. In many cases, the isotropy group of the principal orbits is the trivial group. Within the principal stratum there is a subgroup of $G$ that acts freely, so we return to the simpler case of a free action, albeit on the smaller space obtained by removing certain closed subsets from the original $M$. The fact that the symmetry commutes with the map $f$ implies that every stratum of $M$ is a forward invariant set of $f$. When $f$ is invertible, this set is both forward and backward invariant, but for non-invertible maps there may be points outside
the invariant set that map into it.

Hence when $f$ is a diffeomorphism, the principal stratum is an open subset of full dimension that is invariant under $f$. In the examples to follow, the ``global Poincar\'e section" of the symmetry is often taken to be a section on this principal stratum.

For example, consider the circle action on $\bC^2$ given by
\beq{circleAction}
	G = \{\psi_\tau( z_1, z_2) = ( e^{il \tau} z_1, e^{im \tau } z_2)|\; (z_1,z_2) \in \bC^2,\; \tau \in \bS^1\}
\eeq
for positive integers $l$ and $m$. Note that if $\psi_t$ is a symmetry of a map $f$, then
$f$ must have, e.g., the invariant $\Im( \bar z_1^m z_2^l)$, where $ \Im(z)$ represents the imaginary part of $z$.

When $l=m=1$ the only non-trivial isotropy group is found at the origin. The principal stratum hence is the open set $\bC^2 \setminus \{0\}$, on which the action of $\psi_\tau$ is free, so that it becomes a bundle with fiber $\bS^1$. Nevertheless, the principal stratum $\bC^2 \setminus \{ 0\}$ does not have a global Poincar\'{e} section. This  would imply that it is a fiber bundle with base $\bS^1$, which is impossible since every loop in $\bC^2 \setminus \{ 0\}$ is contractible.
The action may be restricted to the invariant hyperboloids given by the level sets $\Im( z_1 \bar z_2) = L$. Whenever $L \not = 0$, the hyperboloid topologically is $\bR^2 \times \bS^1$ and there is a global section. Consequently upon removing the whole cone $\Im( z_1 \bar z_2) = 0$ from $\bC^2$, each connected component does have a global section. Alternatively one may remove one circle from each invariant $\bS^3$ defined by $|z_1|^2 + |z_2|^2 = const$, which turns $\bS^3$ into $\bR^2 \times\bS^1$. Which ``surgery'' to chose depends on properties of the map $f$.

When $l=1$ and $m>1$ there is a stratum in addition to the origin given by $(0, z_2)$ with discrete isotropy group $t = j / m$, $j = 1, \dots, m-1$. Every map with symmetry $\psi_\tau$ has this set as an invariant plane. Removing this plane gives the principal stratum $A \times \bC$, where $A = \bC \setminus \{ 0 \}$. The section $\Im(z_1) = 0$, $\Re(z_1) > 0$ has the set $z_1 = 0$ as an invariant boundary, and hence this section is a global Poincar\'{e} section for maps with symmetry $\psi$. Thus we find a trivial bundle $A \times \bC$ with fiber $\bS^1$ and base $\bR^+ \times \bC$.

When both $l$ and $m$ are bigger than 1 there are three strata: the origin $(0,0)$, the plane $(0, z_2)$ and the plan $(z_1, 0)$. Removing these from $\bC^2$ turns the manifold into $A^2$ where $A = \bC \setminus \{ 0 \}$. A section along the positive real axis in either punctured plane $A$ gives a global Poincar\'{e} section.

\section{Examples}\label{sec:Examples}
In this section we will give several examples to illustrate \Th{Reduction}.
We start with a classical two-dimensional example.

\bexam[(M\"obius Map)]
An elliptic M\"obius transformation is conjugate to the form
\[
	f(z) = \frac{\phantom{+}az+b}{-bz+a} \;,
\]
for real $a,b$ with $ab \neq 0$.
This map has fixed points at $z = \pm i$ and is analytic on the upper half plane $M = \{z\in \bC: \Im(z) > 0\}$.
As was noted by \cite{Maeda87}, $f$ has the symmetry $Y = (z^2+1)\frac\partial{\partial z}$, with flow
\[
	\psi_t(z) = \frac{\cos(t)z + \sin(t)}{\cos(t)-\sin(t)z} \;.
\]
On $M$ the orbits of $\psi$ are simply circles of radius $r$ and center $i\sqrt{r^2+1}$ for any $r \ge 0$. The action represented by the symmetry flow is not free, since $z=i$ is a fixed point of $\psi_t$. However on the principal stratum $M \setminus\{i\}$, the action is free, and has the global Poincar\'{e} section $\Sigma = \{i\sigma: 0<\sigma<1\}$, with return time $T = 2\pi$.
Notice that in order to achieve a global Poincar\'{e} section, the map the fixed point must be removed from the upper half plane.

To compute the reduced form \Eq{skewProduct}, we use the covering \Eq{covering}, $p(\sigma,\tau) = \psi_\tau(\sigma)$ and the notation \Eq{skewProduct}, $F(\sigma,0) = (k(\sigma),\rot(\sigma))$ on the section, to obtain
\begin{align*}
	f \circ p(\sigma,0) &= f(\sigma)=
	 \frac{a\sigma +b}{a-b\sigma}\\
	p \circ F(\sigma,0) &= \psi_{\rot(\sigma)}(k(\sigma)) =
	 \frac{\cos(\rot(\sigma))\,k(\sigma) + \sin(\rot(\sigma))}{\cos(\rot(\sigma))-\sin(\rot(\sigma))\,k(\sigma)} \;.
\end{align*}
Since, by \Eq{liftDef}, these must be equal we conclude that $F$ takes the form
\[
	F(\sigma,\tau) = (\sigma, \tau + \rot(\sigma)) \;,
\]
where $\rot(\sigma)\equiv\rot$ is, in fact, any constant such that $\rot = \arg( a + i b)$.
The most natural choice may be the principal value $\operatorname{Arg}( a + i b) \in (-\pi, \pi]$, and each branch of $\arg$ gives a different lift.
Note that both $f$ and $Y$ have the invariant
\[
	I(z) = \frac{1}{\Im(z)} (|z|^2+1) \;.
\]
According to \Lem{Invariant}, an invariant for $k$ is naturally $\iota^*I = \sigma + \frac{1}{\sigma}$, or trivially $\sigma$ itself.
\eexam

\bexam[(Twisted Symmetries):]
On the manifold $M=\bR^2\times (\bR/\bZ)$, define
\beq{moebius}
	f(\xi,z)= \left( R_{\beta z}\,h\left( R_{-\beta z} \xi \right),\, z
		+\alpha(|\xi|) \right) \;,
\eeq
where $\alpha: \bR^+ \to \bR$, $R_\theta \in SO(2)$ is the rotation by angle $2\pi \theta$, and $\beta \in \bR$. The map $h: \bR^2 \to \bR^2$ is smooth and, so that $f$ be continuous on $M$, is assumed to have the symmetry
\beq{twistedMap}
	h\circ R_\beta = R_\beta \circ h\;.
\eeq
Note that if $h$ were to commute with $R_\theta$ for all $\theta$, then $R_{\beta z} h(R_{-\beta z} \xi) = h(\xi)$, so \Eq{moebius} would already be written in the skew-product form. Instead we assume that $\beta$ is rational, so that the symmetry \Eq{twistedMap} is discrete.

The map \Eq{moebius} has the symmetry
\beq{rotationalFlow}
	\psi_t(\xi,z)=(R_{\beta t}\,\xi ,z+t) \;.
\eeq
This flow has a global Poincar\'{e} section $\Sigma = \{(\sigma,0): \sigma \in \bR^2\}$, with the covering map $p(\sigma,\tau) = \psi_\tau(\sigma,0)$.
Since $\Sigma$ is simply connected, \Th{Reduction} implies there exits a lift $F$ which can be easily computed using \Eq{kDefine}:
\beq{twistedLift}
	F(\sigma,\tau) = (R_{-\beta\alpha(|\sigma|) }\,h(\sigma), \tau + \alpha(|\sigma|)) \;.
\eeq
Since $\beta = \frac{p}{q}$ is rational, then by \Eq{twistedMap} $h$ must have the associated \emph{discrete} $(p,q)$ symmetry. In this case, the orbits of $\psi$ are $(p,q)$ torus knots on $M \setminus \{\xi = 0\}$. The action of the symmetry group is not free since the isotropy group of any point on the $z$-axis is different from those not on the axis. This example shows that even when the orbit space $M/G$ is not a manifold, the reduced map of \Th{Reduction} may still be as smooth as the original map. The point is that \Eq{twistedLift} has a residual discrete symmetry with the action $g(\xi,\tau) = (R_\beta\xi,\tau)$. Factoring out this discrete symmetry would indeed lead to a singular reduced space.

An example of an orbit of the reduced map in the section $\Sigma$ is shown in \Fig{sandDollar}, and many other examples of such maps can be constructed by replacing $h$ with any map with a discrete symmetry like those in \cite{Field09}.

\InsertFig{sandDollar}{Points on an attractor for the reduced map $k$ in \Eq{twistedLift} for $\beta = \frac{1}{3}$. Here, using the complex form $u = x+iy$, following \cite{Chossat88}, we took $h(u,\bar u) = (\lambda + u \bar u)u + \gamma \bar u ^2$.
The figure shows the case $\lambda = -2.43$, $\gamma = 0.1$ and $\alpha(|\sigma|) = - 2.67$.}{sandDollar}{3in}
\end{example}

\bexam[(Non-orientable manifold):]
For $(\xi, z) \in \bR^2 \times \bR$, let $G$ be the discrete group generated by $g(\xi,z)=(-\xi,z+1)$. Note that  $G \simeq \bZ$, and the action $(g,(\xi,z))\mapsto g(\xi,z)$ on $\bR^3$ is free and proper. This implies that $M=\mathbb{R}^3/G$
is a manifold; in this case $M$ is non-orientable.

Let $f:\mathbb{R}^3\to\mathbb{R}^3$ be the map
\beq{moebius2}
	f(\xi,z)= \left( R_{(z +\beta)/2}h\left( R_{-z/2}\xi \right),\, z   +\beta \right).
\eeq
where (as in the previous example) $R_\theta$ is the rotation by angle $2 \pi \theta$, $h: \bR^2 \to \bR^2$, and $\beta\in\mathbb{R}$. Note that since  $R_{(z+1)/2} = -R_{z/2}$, $f \circ g = g \circ f$, thus $f$ can be thought of as a map on the quotient $M$.

It is easy to see that the flow
\[
	\psi_t(\xi,z)=(R_{t/2} \xi,z+t) \;.
\]
is a symmetry of \Eq{moebius2}. Note that $\psi_{t}\circ g=g\circ\psi_{t}$ so that $\psi_t$ defines a flow on $M$. Moreover, since $\psi_{t+1}=g\circ\psi_t$ the orbits of the symmetry $\psi$ are embedded circles in $M$, and $\psi$ has the global Poincar\'{e} section
\[
	\Sigma=\{(\sigma,0): \sigma \in \bR^2\} \subset M\;.
\]
The return map for this case is simply $r_\psi(\sigma) = \psi_1(\sigma) = -\sigma$. Computation of the reduced map using \Eq{kDefine} gives  $F(\sigma,\tau) = (h(\sigma), \tau + \beta)$.

Note that \Eq{moebius2} is volume preserving on $M$ whenever $h$ is area preserving; for example, the map
\[
	h(x,y)= \left(y, \frac{2a y}{y^2+1}-x\right)
\]
is area preserving for any $a \in \bR$. This map also has the integral
\[
	J(x,y)=y^2 x^2+x^2-2a y x+y^2 \;,
\]
i.e., one has $J\circ h= J$.  In this case the function
\[
	I(\xi,z)=J\left(R_{-z/2} \xi\right) \;.
\]
is an integral of $f$. Since it is also an integral of $\psi_t$,  $I\circ \psi_t=I$, \Lem{Invariant} implies that $h$ has the reduced integral $\iota^* I = J$, as we already knew.
\eexam

\bexam[(Nonhyperbolic Cat Maps):]
Suppose $f$ is a diffeomorphism of the $n$-torus $M=\bR^n/\bZ^n$ that fixes the origin, $f(0) = 0$, and is homotopic to the map $b(\xi)=B\xi$, where $B\in SL(n,\bZ)$ is an $n\times n$ unimodular matrix.

We will assume that $B$ has a simple eigenvalue $\lambda = 1$ with right eigenvector $v$
and left eigenvector $w$. The vector field $V = v \cdot \nabla$ generates the flow
\beq{CatFlow}
	\psi_t(\xi)=\xi+t\,v \;,
\eeq
In order that $V$ be a symmetry of $f$, condition \Eq{MapSymmetry} requires that $Df(\xi)v=v$
for all $\xi\in M$. For example, the map
\beq{HomoCat}
	f(\xi)=B\xi+\phi(\xi) v \;,
\eeq
with $\phi(0) = 0$, is of the assumed form. It has the symmetry $\psi_t$ if $\phi(\xi+t\,v)=\phi(\xi)$, for all $t$, so that $D\phi(\xi) v = 0$.

Since $w$ is a left eigenvector of $B$, the set $\Sigma = \{\xi \in M : w \cdot \xi = 0 \mod 1\}$ is a codimension-one subspace invariant under $b$ and transverse to $v$. Indeed since $B$ has integer coefficients, $\Sigma\hookrightarrow M$ is a codimension-one torus; an example is sketched in \Fig{3torus}. Moreover, it is clear that since $\Sigma$ is transverse to $v$, it is a Poincar\'e section for the flow \Eq{CatFlow}.

Since $f$ and $b$ are homotopic and both fix the origin, $f_*=b_*$, as maps on the fundamental group $\fun(M,0)$.
Moreover, $\Sigma$ is an invariant set for $b$, so the group $\iota_*\fun\left(\Sigma,0\right)$ is invariant under
 $b_*$. Consequently,
\[
	f_*\iota_*\fun\left(\Sigma,0\right)=b_*\iota_*\fun\left(\Sigma,0\right)
				\subseteq  \iota_*\fun\left(\Sigma,0\right) \;.
\]

Therefore, \Th{Reduction} and condition \Eq{newSubset} imply that there exists a lift $F$
of $f$ to the cover $\Sigma\times\bR$ of the form of \Eq{skewProduct} with a covering map $p:\Sigma\times\bR\to M$ of the form $p(\sigma,\tau) 	 =  \sigma + \tau v$. To simplify the computations we will use an equivalent cover,
$\widetilde{p}:\bT^{n-1}\times \bR\to M$ given by
\[
	\widetilde{p}(\sigma,\tau) 	 = S \sigma + \tau v\;,
\]
where $(S\,|\,v) \in SL(n,\bZ)$ and the columns of the $n \times(n-1)$ integer matrix $S$ form a basis for $\Sigma$. Consequently,
\beq{Bhat}
	B (S\,|\,v) =  (S\,|\,v) \begin{pmatrix} \hat B & 0 \\ 0 & 1 \end{pmatrix}
\eeq
with $\hat B \in SL(n-1,\bZ)$.
In this case, a lift of the form \Eq{skewProduct} exists and must satisfy \Eq{liftDef}. For example, a lift of \Eq{HomoCat} is
\[
    F(\sigma,\tau) = (\hat B \sigma, \tau +\omega(\sigma) ) \;,
\]
where $\omega(\sigma)=\phi(S\sigma)$.

\InsertFig{3torus.pdf}{The global Poincar\'{e}  section $\Sigma$ of the flow \Eq{CatFlow} for the matrix \Eq{CatMap} is an embedded submanifold of the three-torus $M$.}{3torus}{2.4in}

An explicit case of the form \Eq{HomoCat} for $n=3$ is
\beq{CatMap}
	B=\begin{pmatrix}	0 & 1 & 0 \\
 						0 & 0 & 1 \\
 						1 & -4 & 4
 		\end{pmatrix} \;,
		\quad 	\mbox{and } \phi(x,y,z) = g(x-y,y-z)
\eeq
for a function $g: \bT^2 \to \bR$ such that $g(0)=0$.
Note that $1$ is an eigenvalue of $B$ with right and left eigenvectors $v=(1,1,1)$ and $w = (-1,3,-1)$, respectively.
Thus the two-torus
\beq{catSection}
	\Sigma=\{(x,y,z)\in M: x-3y+z=0 \mod 1\} \;,
\eeq
shown in \Fig{3torus}, is invariant under $b$ and is a global Poincar\'{e} section for the flow \Eq{CatFlow}. Moreover, since $D\phi v = 0$, the map $f$ has symmetry \Eq{CatFlow}.

The fundamental group of $\Sigma$ is generated by the loops $\eta_1(t)=(2t,t,t)$ and $\eta_2(t)=(t,t,2t)$. Setting $u_i=\iota_*[\eta_i]$ for $i=1,2$, it is easy to verify that
$f_*u_1=u_2$ and $f_*u_2= -u_1+ 3u_2$. Consequently, \Th{Reduction} implies there exists a lift of $f$ of the form \Eq{skewProduct}. Indeed, the covering map
\[
	\widetilde{p}(\sigma,\tau) = \begin{pmatrix}
						2 & -1 & 1 \\ 1 & -1 & 1 \\ 1 & -2 & 1\end{pmatrix}
					\begin{pmatrix} \sigma_1 \\ \sigma_2 \\ \tau\end{pmatrix} \;,
\]
gives $(\sigma_1, \sigma_2, \tau) = (x-y,y-z,-x + 3y-z)$, so that $\tau = 0$ corresponds to $\iota(\Sigma)$. Upon computing $\hat B$ from \Eq{Bhat}, the lift takes the form \Eq{skewProduct}
with
\[
\begin{split}
	k(\sigma) &= \begin{pmatrix} 0 & 1 \\ -1 & 3 \end{pmatrix} \sigma \;,\\
	\rot(\sigma) &= g(\sigma_1,\sigma_2) \;.
\end{split}
\]
Since the eigenvalues of $k$ are $\gamma^{\pm 2}$ where $\gamma$ is the golden mean, the reduced map is an Anosov diffeomorphism.

Note that the one cannot freely replace \Eq{catSection} with any global section of \Eq{CatFlow} and still satisfy the topological requirement \Eq{lift}. For example, the torus $\tilde \Sigma = \{(0,y,z) \in M\}$ is also a global section for \Eq{CatFlow}. However, the map $b(\xi) = B\xi$ takes the generators $[(0,t,0)]$ and $[(0,0,t)]$ of the fundamental group of $\tilde \Sigma$ into $[(t,0,-4t)]$ and $[(0,t,4t)]$, violating \Eq{lift}.
\eexam

\bexam[(Resonant Circle Action):]
The flow $\psi_\tau(z_1,z_2) = (e^{il \tau} z_1, e^{i m \tau} z_2)$ of \Eq{circleAction}
corresponds to a circle action on $M = \bC^2$ and is familiar from the study of resonant, coupled oscillators.
As already noted in \Sec{OrbitSpaces} the action \Eq{circleAction} is not free: it has a fixed point at $(0,0)$ and points with nontrivial, discrete isotropy for each point $(z_1, 0)$ whenever $l>1$ and for each point $(0, z_2)$ whenever $m > 1$.

The general map on $\bC^2$ with the symmetry \Eq{circleAction} can be written
(see, e.g., \cite{Golubitsky88})
\[ \begin{split}
	z'_1 = f_1(\rho) z_1 + f_2(\rho)\bar z_1^{m-1} z_2^l \;, \\
	z'_2 = f_3(\rho) z_2 + f_4(\rho) z_1^m \bar z_2^{l-1} \;.
\end{split}\]
where the $f_j(\rho)$ are complex-valued functions of the real invariants of \Eq{circleAction}, namely
\beq{HilbertBasis}
   \begin{array}{ll}
      \rho_1 = z_1 \bar z_1 \;, & \rho_2 = z_2 \bar z_2 \;, \\
      \rho_3 = \Re( z_1^m \bar z_2^l) \;, & \rho_4 = \Im( z_1^m \bar z_2^l) \;.  \\
   \end{array}
\eeq
These four invariants form a \emph{Hilbert basis}: every function invariant under the action \Eq{circleAction} is a function of $\{\rho_1,\rho_2,\rho_3,\rho_4\}$ (see also \Sec{Comparison}). For example, $\det{Df}$ is invariant under the symmetry $\psi_\tau$, and hence a function of the invariants.

As a particular example consider
\[
 	(z'_1, z'_2) = ( z_1 - i \eps m \bar z_1^{m-1} z_2^l,
	              z_2 - i \eps l z_1^m \bar z_2^{l-1}) \;.
\]
This map is, to lowest order in $\eps$, the time-$\eps$ flow of the Hamiltonian $H =\rho_3$, and so is approximately a four-dimensional symplectic map in $(x_1,x_2,y_1,y_2)$, up to terms of order $\eps^2$. The flow of this Hamiltonian has two invariants, namely $H$ itself, and the Hamiltonian that generates the symmetry, $l \rho_1 + m \rho_2$. When $\eps \ll 1$ these functions will be approximate invariants of the map as well.

Assuming, for concreteness, that $m > l \ge 1$, then since $\psi_t$ does not act freely on $M=\bC^2$, there are points with nontrivial isotropy, in particular the set $\{(0, z_2): z_2 \in \bC\}$.  This set is forward invariant under both $\psi_t$ and $f$, and
removing it from $M$ gives the manifold $\tilde M = A \times \bC \simeq \bR^+ \times\bS^1 \times \bC$. On this manifold $\psi_t$ has the global Poincar\'e section
\[
	\Sigma = \{(x_1,z_2), x_1> 0, z_2 \in \bC\} \subset \tilde M \;.
\]
Thus to get a global section, it is not necessary to restrict to the principal stratum, which would entail, when $l>1$, the removal of the points $(z_1,0)$ as well. On the section $\Sigma$ these additional points are fixed under $f$.

The restriction of $f$ to $\tilde M$ has a lift on $\Sigma \times \bR$ determined by \Eq{kDefine}, i.e., by the requirement
\[
	e^{-i\rot} (x_1 -  i \eps m x_1^{m-1} z_2^l,
	z_2 - i \eps l x_1^m z_2^{l-1}) \in \Sigma \;.
\]
Thus $\rot$ is determined by requiring the first component to be real and positive, and $F$ is given by
\[\begin{split}
	k(x_1, z_2) &= ( x_1 |1 - i \eps m x_1^{m-2} z_2^l |,
	              e^{-i \rot} (z_2 - i \eps  l x_1^m z_2^{l-1} ) ) \;,\\
	\rot(x_1,z_2) &= \arg( 1- i \eps m x_1^{m-2} z_2^l) \;.
\end{split}\]

Note that even though the set $z_1 = 0$ is forward invariant, since $f$ is not invertible
there are points that map into the invariant set.
Similarly, the reduced map $k$ has points that map into the boundary $x_1 = 0$
given by the solutions of $z_2^l =  ( i \eps m x_1^{m-2})^{-1} $. The reduced map $k$ can be extended to the excluded set $x_1 =0$ by continuity, so that the pre-images
of $x_1 = 0$ have well defined orbits. Even so, the fibre map $\rot$ is undefined for points that map into the set $x_1 = 0$ so, strictly speaking, reduction by lifting fails in this case. However, even though the fibre map is undefined for certain points, the reduced map $k$ is well behaved.

%

\eexam

\section{Volume-Preserving Symmetry Reduction} \label{sec:VP}

We will now show that, when the map $f$ and its symmetry are both volume preserving, the reduced map $k$ of \Eq{skewProduct} is also volume preserving on $\Sigma$, with respect to an appropriate volume form.
This specializes \Th{Reduction} to the volume-preserving setting.
We denote the volume form by $\Omega \in \Forms^n(M)$; by assumption, both $f$ and $\psi_t$ preserve this form, $f^*\Omega = \Omega$ and $\psi_t^* \Omega = \Omega$, respectively. Equivalently, the symmetry vector field is incompressible: $L_Y \Omega = (\nabla \cdot Y) \Omega = 0$, where $\nabla \cdot Y$ is the divergence of $Y$. When $\nabla \cdot Y=0$,
we will say that $Y$ is incompressible.

\begin{thm}\label{thm:ReducedVolumeForm} In addition to the hypotheses of \Th{Reduction}, assume that $f$ is volume preserving and its symmetry $Y$ is incompressible. Then the reduced map $k$ of \Eq{skewProduct} preserves a volume form $\nu$ on $\Sigma$ defined by
\beq{nuDefn}
	\nu = \iota^*i_Y \Omega \;.
\eeq
\end{thm}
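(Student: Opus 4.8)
The plan is to exploit the commutative relation $p \circ F = f \circ p$ together with the skew-product form $F(\sigma,\tau) = (k(\sigma), \tau + \rot(\sigma))$ and push everything down to the section $\Sigma$. The natural object to work with on the cover $C = \Sigma \times \bR$ is the pulled-back volume form $p^*\Omega \in \Forms^n(C)$. Since $p$ is a local diffeomorphism and $\psi_t$ is incompressible, $p^*\Omega$ is an $F$-invariant and $\Psi_t$-invariant volume form on $C$; in particular $i_{\partial_\tau}(p^*\Omega)$ is a well-defined $(n-1)$-form on $C$ which, because $p_*\partial_\tau = Y$ along the relevant fibres, equals $p^*(i_Y\Omega)$. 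Restricting to the slice $j:\Sigma \to C$, $j(\sigma) = (\sigma,0)$, and using $p \circ j = \iota$, we get $j^*p^*(i_Y\Omega) = \iota^* i_Y\Omega = \nu$, so $\nu$ is exactly the restriction of $i_{\partial_\tau}(p^*\Omega)$ to the zero slice.

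Next I would verify that $\Omega$ decomposes on the cover as $p^*\Omega = d\tau \wedge \mu + (\text{terms with no }d\tau)$ is too crude; instead the cleaner statement is that on $C$ one can write $p^*\Omega = d\tau \wedge \pi^*\nu$ where $\pi = \Pi$ is the projection to $\Sigma$ — this holds because $i_{\partial_\tau}(p^*\Omega)$ is a basic form for the trivial fibration $\Pi$ (it is $\Psi_t$-invariant and annihilated by $i_{\partial_\tau}$), hence equals $\Pi^*\nu_0$ for a fixed $(n-1)$-form $\nu_0$ on $\Sigma$, and $\nu_0 = j^* i_{\partial_\tau}(p^*\Omega) = \nu$; moreover $i_{\partial_\tau}\big(p^*\Omega - d\tau\wedge\Pi^*\nu\big) = 0$ and $L_{\partial_\tau}$ of that difference vanishes, forcing it to be zero since an $n$-form on an $n$-manifold killed by $i_{\partial_\tau}$ must be a multiple of $d\tau$. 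With $p^*\Omega = d\tau \wedge \Pi^*\nu$ in hand, compute $F^*(p^*\Omega)$: since $F^*d\tau = d\tau + d(\rot\circ\Pi)$ and $F^*\Pi^*\nu = \Pi^*k^*\nu$, the wedge kills the $d\rot$ piece (it is pulled back from $\Sigma$ and so is $\Pi^*$ of an $(n-1)$-form, wedging against $\Pi^*k^*\nu$, another $(n-1)$-form — actually $d\rot\wedge\Pi^*k^*\nu$ need not vanish for $n>2$, so I must instead argue via $i_{\partial_\tau}$). Cleanly: $F^*(p^*\Omega) = p^*\Omega$ because $p\circ F = f\circ p$ and $f^*\Omega=\Omega$; apply $i_{\partial_\tau}$ to both sides and use $F_*\partial_\tau = \partial_\tau$ (immediate from the skew form) to get $i_{\partial_\tau}F^*(p^*\Omega) = F^*i_{\partial_\tau}(p^*\Omega) = F^*\Pi^*\nu = \Pi^*k^*\nu$, while $i_{\partial_\tau}(p^*\Omega) = \Pi^*\nu$; hence $\Pi^*k^*\nu = \Pi^*\nu$, and since $\Pi^*$ is injective on forms, $k^*\nu = \nu$.

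The main obstacle is the bookkeeping establishing $p^*\Omega = d\tau \wedge \Pi^*\nu$ rigorously: one must confirm that $i_{\partial_\tau}(p^*\Omega)$ is genuinely a pullback under $\Pi$ of a form on $\Sigma$ (i.e. it has no $d\tau$ component and its coefficients are $\tau$-independent), which is where incompressibility of $Y$ (equivalently $L_{\partial_\tau}p^*\Omega = 0$, hence $L_{\partial_\tau}i_{\partial_\tau}p^*\Omega = 0$) is essential. Everything else is formal manipulation with Cartan's identities and the relations $p\circ F = f\circ p$, $p\circ j = \iota$, $\Pi\circ j = id_\Sigma$, $F_*\partial_\tau = \partial_\tau$. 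A small point worth flagging in the write-up: one should check $\nu$ is nowhere zero, i.e. a genuine volume form on $\Sigma$ — this follows because $Y$ is transverse to $\Sigma$ (it is the generator of a flow with $\Sigma$ as a cross section) and $\Omega$ is nondegenerate, so $i_Y\Omega$ restricts nondegenerately to any hypersurface transverse to $Y$.
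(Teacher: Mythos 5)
Your proposal is correct and takes essentially the same route as the paper: the heart of both arguments is the identity $\Pi^*\nu = p^*(i_Y\Omega)$ (your $i_{\partial_\tau}p^*\Omega = \Pi^*\nu$), followed by the same chain of pullbacks through $F$, $p$, $\Pi$ and the slice $j$, so the two proofs differ only in that you establish the key identity intrinsically (horizontality plus $L_{\partial_\tau}$-invariance makes the contracted form basic) while the paper uses flow-box coordinates and $\tau$-independence of the density. Your self-corrected detour toward $p^*\Omega = d\tau\wedge\Pi^*\nu$ is unnecessary for the final clean argument but not wrong.
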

\begin{proof}
First we note that $\mu := i_Y \Omega$ is an $(n-1)$-form on $M$. Moreover, since $\Sigma$ is a section, $Y \not\in T\Sigma$, and so $\nu = \iota^*\mu$ is non-degenerate on $\Sigma$. In addition,
\(
	d\nu = \iota^* di_Y\Omega = \iota^*(L_Y\Omega-i_Yd\Omega) = 0 \;;
\)
thus $\nu$ is a volume form. Finally, from \Eq{MapSymmetry} we have
\(
	f^*\mu = i_{f^*Y} f^*\Omega = i_Y \Omega = \mu \;,
\)
so $f$ preserves $\mu$.

We now assert that, in fact,
\beq{form-aux}
	\Pi^*\nu = p^*\mu \;,
\eeq
where $\Pi:\Sigma\times \bR\to \Sigma$ is the canonical projection.
To see this recall that since $p$ is a local diffeomorphism and $Y\neq0$ everywhere,
there exists a flow box \cite{Abraham78} in $M$ near each point of $\Sigma$.
Therefore, we can reduce the proof to the case in which $\Sigma$ is an open set in
$\bR^{n-1}$, $M$ is of the form $M=\Sigma\times\bR$
and $p$ is the identity. If we let $(\sigma,\tau)$ be the coordinates of $M$
then, by construction, we have that $Y=\partial/\partial\tau$ and
the flow on $M$ is $\psi_t(\sigma,\tau)=(\sigma,\tau+t)$. In these coordinates,
we have that $\iota\circ\Pi(\sigma,\tau)=(\sigma,0)$, and the volume form
can be written as
\[
	\Omega=\kappa(\sigma,\tau)d\sigma_1\wedge\cdots\wedge d\sigma_{n-1}\wedge d\tau \;.
\]
Since $\psi_t$ is volume preserving, we conclude that
$\kappa(\sigma,\tau)=\kappa(\sigma,0)$, for all $(\sigma,\tau)\in M$.
This implies that $\mu=i_Y \Omega $ does not depend on $\tau$ and
therefore
\(
	p^*\mu =(\iota\circ\Pi)^*\mu=\Pi^*\nu \;,
\)
which is equality \Eq{form-aux}.

Finally from the definition \Eq{reducedMap} of $k$, and the fact that   $\Pi\circ j=id_\Sigma$,
we get that
\[
 	k^*\nu=j^* F^*\Pi^*\nu=j^* F^*p^*\mu
		 =j^*p^*f^*\mu=j^*p^*\mu=j^*\Pi^*\nu=\nu \;.
\]
In this way, we conclude that $k$ preserves $\nu$.
\end{proof}

Theorem \ref{thm:ReducedVolumeForm} can be combined with \Cor{FlowReduction} to show that $\nu$ is a reduced volume form for an incompressible vector field with an incompressible symmetry.

\begin{cor}\label{cor:ReducedVolumeForm}
Suppose that $X,Y \in \VectorField(M)$ are incompressible, commuting vector fields and that $Y$ has a flow $\psi$ with a global Poincar\'{e}  section $\Sigma$ that is an immersed manifold $\iota:\Sigma\hookrightarrow M$. Then there exists a vector field
$K$ on $\Sigma$ and a function $\zeta: \Sigma \to \bR$, such that
\beq{decomposition}
	\left.X\right|_\Sigma=K+ \zeta \left.Y\right|_\Sigma \;,
\eeq
where $K$ is incompressible with respect to the volume form $\nu=\iota^*i_Y\Omega$ on $\Sigma$.
\end{cor}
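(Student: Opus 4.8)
The plan is to reduce the statement about the vector field $X$ to the already-established statements about maps by passing to the flow of $X$ and applying \Cor{FlowReduction} together with \Th{ReducedVolumeForm}. First I would invoke \Cor{FlowReduction} with $\varphi = \varphi_t$ the flow of $X$ (which commutes with $\psi$ since $[X,Y]=0$): this yields a lift $\Phi_t(\sigma,\tau) = (k_t(\sigma), \tau + \rot(\sigma,t))$ which is itself a flow on $\Sigma\times\bR$, so that $k_t$ is a flow on $\Sigma$. I would then \emph{define} $K$ to be the infinitesimal generator of $k_t$, i.e. $K(\sigma) := \left.\tfrac{d}{dt}\right|_{t=0} k_t(\sigma)$, and $\zeta(\sigma) := \left.\tfrac{\partial}{\partial t}\right|_{t=0}\rot(\sigma,t)$. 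Differentiating the relation $\varphi_t(\sigma) = \psi_{\rot(\sigma,t)}(k_t(\sigma))$ from the proof of \Cor{FlowReduction} at $t=0$, using $\varphi_0 = \mathrm{id}$, $k_0 = \mathrm{id}$, $\rot(\sigma,0)=0$, and the chain rule, gives precisely $X|_\Sigma = K + \zeta\, Y|_\Sigma$ as in \Eq{decomposition}.

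Next I would establish that $K$ is incompressible with respect to $\nu = \iota^*i_Y\Omega$. The cleanest route is to note that $k_t$ is a flow of diffeomorphisms of $\Sigma$, each of which is the reduced map of the volume-preserving diffeomorphism $\varphi_t$ (volume-preserving since $X$ is incompressible, so $L_X\Omega=0$ and $\varphi_t^*\Omega=\Omega$), with symmetry the incompressible field $Y$. Hence \Th{ReducedVolumeForm}, applied to $f = \varphi_t$, shows $k_t^*\nu = \nu$ for every $t$. Differentiating $k_t^*\nu = \nu$ at $t=0$ gives $L_K\nu = 0$, which is exactly incompressibility of $K$ relative to $\nu$. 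One should check that the hypotheses of \Th{ReducedVolumeForm} are genuinely met for each $\varphi_t$ — this is immediate: the standing assumptions of \Th{Reduction} hold because $\psi$ has the global section $\Sigma$, and the topological condition \Eq{groupSubset} was verified inside the proof of \Cor{FlowReduction} (indeed $q_*\fun = p_*\fun$ there), so the lift $\Phi_t$ does have the skew-product form required.

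I expect the main obstacle to be bookkeeping rather than anything deep: one must make sure the \emph{same} reduced volume form $\nu$ works simultaneously for all $\varphi_t$ — which it does, since $\nu = \iota^*i_Y\Omega$ depends only on $Y$ and $\Omega$, not on $X$ — and one must be a little careful that $k_t$ from \Cor{FlowReduction} really is the reduced map $\Pi\circ\Phi_t\circ j$ of $\varphi_t$ in the sense of \Eq{reducedMap}, so that \Th{ReducedVolumeForm} applies verbatim; this follows directly from $\Phi_t$ being a lift of $\varphi_t$ in skew-product form with the given base point normalization. A minor alternative, avoiding differentiation of forms, is to argue directly: since $k_t^*\nu=\nu$ for all $t$ and $k_t$ is generated by $K$, the identity $\tfrac{d}{dt}k_t^*\nu = k_t^*L_K\nu$ forces $L_K\nu=0$; either way the conclusion is the same.
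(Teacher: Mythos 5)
Your proposal is correct and follows essentially the same route as the paper: invoke the flow-reduction result to get the skew-product lift $\Phi_t$, take $K$ to be the generator of the reduced flow $k_t$ and $\zeta$ the $t$-derivative of $\rot$ at $t=0$, obtain \Eq{decomposition} by differentiating at $t=0$, and deduce $L_K\nu=0$ from $k_t^*\nu=\nu$, which holds by applying \Th{ReducedVolumeForm} to each volume-preserving diffeomorphism $\varphi_t$. Your extra bookkeeping (that $\nu$ is independent of $t$ and that $k_t$ is the reduced map in the sense of \Eq{reducedMap}) is implicit in the paper's argument and does no harm.
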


\begin{proof}
By \Th{Reduction} the flow $\varphi$ of $X$ has a lift $\Phi$ to the cover $\Sigma\times \bR$ of the form \Eq{FlowReduction}. Let $K \in \VectorField(\Sigma)$ denote the vector field generated by the reduced flow $k$ of $\Phi$. By
\Th{ReducedVolumeForm}, each $k_t$ preserves the volume-form $\nu$ so that
$K$ is incompressible on $\Sigma$ with respect to the form in \Eq{nuDefn}.

Finally, since $K$ is tangent to---and $Y$ is transverse to---$\Sigma$, the
vector field $X|_\Sigma$ can be written as a linear combination of $K$ and $Y$.
Indeed \Eq{FlowReduction} gives
\[
	\left.\frac{\partial}{\partial t}\right|_{t=0}\Phi_t(\sigma,\tau)=
		K + \left.\frac{\partial}{\partial t}\right|_{t=0} \omega(\sigma,t)\frac{\partial}{\partial \tau} \;.
\]
Since $Y=\frac{\partial}{\partial \tau}$ in the coordinates $(\sigma,\tau)$, this implies \Eq{decomposition} with $\zeta(\sigma)= \dot\omega(\sigma,0)$.
\end{proof}

A special case of this corollary is included in a theorem of Haller and Mezic \cite{Haller98}, who treated the case $M = \bR^3$ and noted that the flow on the two-dimensional group orbit space is Hamiltonian.

\bexam[(Volume-Preserving normal form):]

Consider the family of volume-preserving diffeomorphisms, $f: \bT^3 \to \bT^3$, defined by
\beq{VPNormal}
	f(x,y,z) = (x+a(y), y+z+b(y),z+b(y)) \;.
\eeq
Similar maps on $\bR^3$,
arise as the normal form for a volume-preserving map near saddle-node bifurcation with a triple-one multiplier \cite{Dullin08a}.  The map \Eq{VPNormal} has symmetry $Y = \frac\partial{\partial x}$ which generates the flow $\psi_t(x,y,z) = (x+t,y,z)$.
The natural section for $\psi$ is $\Sigma = \{(0,v,w): v,w\in \bT^2\}$, and the corresponding covering map $p: \bT^2 \times \bR \to \bT^3$ defined by $p(v,w,\tau) = \psi_\tau(0,v,w)$. The group $\iota_*\fun(\Sigma,0)$ is generated by the equivalence classes $\gamma_1 = [(0,t,0)]$ and $\gamma_2 = [(0,0,t)]$.
Since $a$ and $b$ are a periodic functions,
\[
f_*(\gamma_1) = \gamma_1 	\;, \quad f_*(\gamma_2) = \gamma_1 + \gamma_2 \;,
\]
If we assume that $a(0)=b(0)=0$, $f$ fixes the origin, and then the requirement \Eq{newSubset} holds.

Of course \Eq{VPNormal} is already in skew-product form, with $\tau = x$, and $\rot(v,w) = a(v)$, and the reduced map becomes $k(v,w) = (v+w+q(v), w+q(v))$. The latter is a generalized Chirikov standard map and is typically chaotic.

Notice that if $a(y) = y$ the topological condition would fail for any choice of section. One way to see this is to note that the linear part of the map would then be a Jordan block that does not have any $2$-dimensional invariant subspaces. Another is to note that lifting with the natural section would give $\rot(u,v) = u$, which is not a continuous function from $\bT^2 \to \bR$. Finally, one might think that this problem could be repaired by first lifting $f$ to the universal cover $\bR^3$, eliminating the topological requirement. However, in this case almost all orbits of the linear map are unbounded--- there is no return map to any surface.
\eexam

\bexam[(4D Volume-preserving map):]
Consider a map $f = T_2 \circ T_1$ of  $\bR^4 \simeq \bC^2$ with coordinates $(z_1, z_2, \bar z_1, \bar z_2)$
given by the composition of the two ``shears''
\begin{align*}
	T_1( z_1, z_2) = \left( z_1( a + i \sqrt{ c_1 - a^2 +
			h_1(z_2 \bar z_2)/(z_1 \bar z_1)}), z_2 \right) \\
	T_2( z_1, z_2) = \left( z_1, z_2( b + i \sqrt{ c_2 - b^2 +
			h_2(z_1 \bar z_1)/(z_2 \bar z_2)}) \right),
\end{align*}
where $a,b,c_1,c_2$ are real constants, and $h_1$ and $h_2$ are non-negative real functions.
Each of the shears has constant Jacobian $c_i$, so $\det Df = c_1 c_2$. Thus if
$c_1c_2 = 1$, $f$ is volume preserving.
If $c_1 = c_2  =1$ the map $f$ is the composition of two symplectic maps.

Each shear commutes with the two symmetries $\psi_t(z_i) = e^{it} z_i$. Taken together this implies that $f$ has a pair of commuting symmetries, i.e., is equivariant under the $\bT^2$ action $\psi_{(t_1, t_2)}( z_1, z_2) = ( e^{it_1} z_1 , e^{i t_2} z_2)$.
The reduction of \Th{ReducedVolumeForm} can be carried out recursively, thus reducing $f$ to an area-preserving map.

To construct a global section, we must remove the sets of nontrivial isotropy from $M$: removing the fixed sets $z_1 = 0$ and $z_2 = 0$ leaves $A^2 = (\bC \setminus \{0\})^2$. On this set there is a global Poincar\'e section for the two-parameter action,  $\Sigma = \{(x_1,x_2): x_i > 0\} \subset A^2$.
The result of the reduction is to simply introduce polar coordinates in both complex planes, and give a fiber map with $-\rot = ( \arg( z_1') , \arg( z_2') )$.

The reduced map $k$ would be written in terms of $(x_1, x_2) \in \Sigma$.
Instead we use the two invariant radii $\rho_1 = z_1 \bar z_1$ and $\rho_2 = z_2 \bar z_2$
as coordinates we find
\[
 T_1( \rho_1, \rho_2) =( c_1 \rho_1 + h_1(\rho_2)), \rho_2) = (\rho_1', \rho_2)
\]
and similarly for $T_2$, so that the reduced area-preserving map (assuming $c_1 c_2 = 1$) is
\[
 	T_2(T_1(\rho_1,\rho_2)) = (c_1\rho_1+ h_1(\rho_2), c_2\rho_2 + h_2(\rho_1'))\;,
\]
which  is smooth even for $\rho_j = 0$.

\eexam

\section{Noether symmetries}\label{sec:Invariant}

For Hamiltonian flows and symplectic maps the close relationship between the existence of invariants and symmetries is well-known. Noether's theorem, for example, implies that a Lagrangian system with a symmetry acting in configuration space has an invariant. This relationship is exploited in the Liouville-Arnold construction of action-angle variables for an integrable $n$-degree-of-freedom Hamiltonian where each invariant generates a symmetry vector field. In this case, the involutive property of the invariants implies that the corresponding symmetry vector fields commute.

As we remarked in the introduction, analogues of Noether's theorem do not exist for maps in general nor for volume-preserving maps in particular. For example the skew product
\[
	f(x,y,\tau) = (k(x,y),\tau + \rot(x,y))
\]
on $\Sigma \times \bS^1$ has the obvious symmetry $Y =\partial/\partial\tau$ and preserves the volume form $\Omega = dx \wedge dy \wedge d\tau$ whenever $k$ is area preserving. If $k$ has no invariant, then neither does $f$. This would occur, e.g., when $\Sigma = \bT^2$, and $k(x,y) = (2x+y,x+y)$, the cat map.

In addition, a three-dimensional map could have an invariant, restricting its orbits to two-dimensional level sets, and yet have no symmetry: the map can be chaotic on the level sets. Examples include the trace maps \cite{Roberts94}, such as the Fibonacci map
\[
	f(x,y,z) = (y,z,-x+2yz) \;.
\]
Trace maps are volume preserving; many similar examples have also been constructed \cite{Gomez02}.
Consequently, symmetries do not necessarily give rise to invariants without some additional structure.

One sufficient additional structure corresponds to symplectic maps with Hamiltonian symmetries.
A symmetry $Y$ that is a Hamiltonian vector field is known as a \emph{Noether symmetry}.
Here we give a slight generalization of a theorem of \cite[App. 1]{Bazzani88},
which required the symplectic map to be a twist map. We show that one can drop this requirement if the map has a mild recurrence property.

\begin{thm}[Symplectic Noether]\label{thm:SymplecticNoether}
If $Y$ is a Noether symmetry of a symplectic map $f$ and $f$ has any recurrent orbits, then $f$ has an invariant. Conversely if $I$ is an invariant of a symplectic map $f$, then its Hamiltonian vector field is a symmetry of $f$.
\end{thm}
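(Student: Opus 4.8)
The plan is to treat the two directions separately, since they are genuinely different in character. The converse is the easy, purely algebraic direction: given an invariant $I$ with $f^*I = I$, let $Y$ be its Hamiltonian vector field, defined by $i_Y \omega = dI$ where $\omega$ is the symplectic form. I would compute $f^* (i_Y \omega) = i_{f^* Y}(f^*\omega) = i_{f^* Y}\omega$ on the one hand, and $f^*(dI) = d(f^* I) = dI = i_Y\omega$ on the other. Since $\omega$ is nondegenerate, this forces $f^* Y = Y$, i.e.\ $Y$ is a symmetry. This is a two-line argument and I would present it first to clear it out of the way.

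The forward direction is where the recurrence hypothesis enters. Since $Y$ is a Noether symmetry, write $i_Y \omega = dH$ for a Hamiltonian $H: M \to \bR$. The natural candidate invariant is built from $H$ and its pullbacks: set $\Delta H := f^* H - H = H\circ f - H$. The first claim I would establish is that $\Delta H$ is itself an invariant of $f$, i.e.\ $f^* (\Delta H) = \Delta H$. To see this, note that because $Y$ is a symmetry of $f$, $f^*(i_Y\omega) = i_Y\omega$, hence $d(f^* H) = f^*(dH) = f^*(i_Y\omega) = i_Y\omega = dH$, so $d(\Delta H) = 0$ and $\Delta H$ is locally constant. On a path-connected manifold $\Delta H$ is then a constant $c$, and in fact $f^*(\Delta H) = \Delta H$ holds trivially. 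But a constant is not yet a useful invariant unless $c = 0$. So the real content is: \emph{show $c = 0$}, and \emph{then} the invariant we actually want is $H$ itself (once $f^*H = H$), or, if $H$ is only locally defined or not itself invariant for a subtler reason, a telescoped average. Here is where recurrence does the work: if $x_0$ is a recurrent point, then along its orbit $H(f^n(x_0)) = H(x_0) + nc$; choosing a subsequence $f^{n_k}(x_0) \to x_0$ and using continuity of $H$ gives $H(x_0) = H(x_0) + \lim n_k c$, which is impossible unless $c = 0$ (the $n_k$ can be taken to $+\infty$). Thus $c=0$, so $f^* H = H$, and $H$ is the desired invariant.

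The step I expect to be the main obstacle is the bookkeeping around whether the Hamiltonian $H$ is globally single-valued. If $\omega$ is exact, or more generally if $i_Y\omega$ is exact, $H$ exists globally and the argument above is clean. If $i_Y\omega$ is only closed (so $Y$ is only \emph{locally} Hamiltonian), one must instead work with the closed $1$-form $\alpha = i_Y\omega$ and the ``flux'' $\int_\gamma \alpha$ over loops, and the recurrence argument has to be recast in terms of showing that $f$ acts trivially on the relevant cohomology class or period lattice; this is the delicate part, and I would either add a hypothesis that $\alpha$ is exact (which matches the phrase ``Noether symmetry = Hamiltonian vector field'') or invoke that the period of $\alpha$ over an orbit-like cycle must vanish by the same recurrence trick. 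A secondary, more routine point to check is that a single recurrent orbit suffices: the invariant $\Delta H$ being globally constant means one recurrent point pins down $c$ on all of $M$, so no density or minimality assumption on the recurrent set is needed. I would close by remarking, as the authors surely intend, that without recurrence the conclusion genuinely fails — a rigid translation on $\bR^2$ is symplectic with a Noether symmetry but no invariant — which motivates exactly why the hypothesis cannot be dropped.
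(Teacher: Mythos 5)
Your proposal is correct and follows essentially the same route as the paper: in the forward direction one shows $d(f^*H - H)=0$, hence $H\circ f = H + c$ on the connected manifold, and a recurrent orbit forces $c=0$; the converse uses nondegeneracy of the symplectic form exactly as you do. Your worry about $H$ being only locally defined is moot here, since the paper defines a Noether symmetry to be a (globally) Hamiltonian vector field, just as you anticipated — though note, as a small aside, that a rigid translation of $\bR^2$ does have invariants (e.g.\ a linear function constant along the translation direction); the paper's non-recurrence example only shows that the symmetry's Hamiltonian itself fails to be invariant.
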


\begin{proof}
For this proof let $\mu$ denote the symplectic form (e.g., $\mu = dq \wedge dp$), and let $I$ be the Hamiltonian of the vector field $Y$: $i_Y\mu = dI$. Consequently
\(
	f^* i_{Y} \mu = d ( f^*I)
\),
but by the symmetry and symplectic conditions
\(
	f^* i_{Y} \mu = i_{f^*Y} f^*\mu = i_{Y} \mu = dI
\).
Consequently $d(f^*I-I) = 0$, so that
$
	I(f(x)) = I(x) +c
$
for some constant $c$. Suppose $x^*$ is a recurrent point, then there exists a subsequence $t_i \to \infty$ such that $f^{t_i}(x^*) \to x^*$. However, since
\(
	I(f^t(x)) = I(f^{t-1}(x))+c = \ldots = I(x) + tc
\),
this implies that
\[
 \lim_{i\to \infty} [I(f^{t_i}(x^*) - I(x^*)] = \lim_{i \to \infty} t_i c = 0 \;,
\]
so that $c = 0$. Thus $I$ is an invariant for $f$.
Conversely, if the symplectic map $f$ has an invariant $I$, then it generates a Hamiltonian vector field $Y$, and
\[
	 i_{f^*Y} \mu = f^*(i_Y \mu) = f^*(dI) = d(f^*I) = dI = i_Y \mu \;.
\]
Thus $i_{f^*Y -Y} \mu =0$, but since $\mu$ is non-degenerate, this can only occur when $f^*Y = Y$.
\end{proof}

The condition that $f$ has a recurrent orbit is necessary. The map
\[
	f(x,y) = (x+c, y + g(x)) \;,
\]
on $\bR^2$ is symplectic with the canonical form $dx \wedge dy$ and has the translation symmetry $Y =\partial/\partial y$. This symmetry is generated by the Hamiltonian $I(x,y) = x$, which, however, is not an invariant for $f$ for any $c \neq 0$. Note that $f$ has no recurrent orbits.
By contrast, if we take the same $f$ but suppose that it is a map on $\bS^1 \times \bR$, then $f$ may have recurrent orbits, but the symmetry vector field $\partial/\partial y$ is locally, but not globally Hamiltonian, and so does not generate an invariant.

\bexam[(Lyness Map):]
For any $a > 0$, the Lyness map \cite{Lyness45},
\beq{Lyness}
	f(x,y) = \left(y, \frac{a+y}{x}\right)
\eeq
is a diffeomorphism on the positive quadrant $M = \bR^2_+$ that
preserves the symplectic form $\mu = \frac{1}{xy} dx \wedge dy$.
It has the symmetry \cite{Beukers98}
\[
	Y = {xy} ( -\partial_y I, \partial_x I)
	  =-xy\frac{\partial I}{\partial y}\frac{\partial }{\partial x}
	   +xy\frac{\partial I}{\partial x}\frac{\partial }{\partial y} \;,
\]
where
\[
	I = \frac{(1+x)(1+y)(a+x+y)}{xy} \;.
\]
The function $I$ is an invariant for $f$.
Moreover, $Y$ is the Hamiltonian vector field of $I$ with
respect to $\mu$ and so $I$ is an invariant for $Y$. This is
in agreement with \Th{SymplecticNoether}.

The level sets of the invariant $I$ for $I > I_{min} = I(x^*,x^*)$ are topologically circles that intersect the line
$\Sigma = \{(\sigma,\sigma): \sigma > x^*\}$ with $x^* = \frac12(1+\sqrt{1+4a})$ exactly once. The symmetry is not free since the fixed point $(x^*,x^*)$ has nontrivial isotropy. However on the principal stratum $\tilde M = M \setminus\{(x^*,x^*)\}$, $\Sigma$ is a global Poincar\'e section for the flow of $Y$.

Given that $\Sigma$ is simply connected, $f$ can be reduced with a lift.
The lifted map on $\Sigma \times \bR$
is $F(\sigma,\tau) = (\sigma, \tau + \rot(\sigma))$.
An integral expression for $\rot(\sigma)$ can be obtained from the results in \cite{Beukers98}.

\eexam

In the next section, we will apply these results to a four-dimensional symplectic map.

\section{Comparison of reduction methods} \label{sec:Comparison}
In this section, we will illustrate three different approaches to reduction, namely
\begin{itemize}\setlength{\itemsep}{0mm}
 \item invariants,
 \item polar coordinates, and
 \item reduction by lifting.
\end{itemize}
We will use, as an example, a four-dimensional symplectic map with a rotational symmetry.

A symplectic map on $M=\bR^4 \simeq \bC^2$ of standard type with a symmetry can be constructed using the Lagrangian generating function
$S(z,z') = \frac12 |z - z'|^2 - W(z \bar z)$ with $z = x + i y \in \bC$ and $W:\bR\to\bR$ a $C^2-$function.
This generating function is invariant under the rotation $( e^{i \tau} z, e^{i \tau} z')$.
Introducing the conjugate momenta $w = p_x + i p_y$, then the corresponding 4D map, generated
by $w = -\partial_{\bar z} S$ and $w' = \partial_{\bar z'} S$, is
\beq{4DSymplectic}
 	f(z,w) = ( z' , w') = \left( z + w', w - z V( z \bar z) \right)
\eeq
where $V(x)=\frac{d}{dx}W(x)$.
The map \Eq{4DSymplectic} has the symmetry $\psi_t(z, w) = ( e^{it} z, e^{it} w)$, e.g., the flow \Eq{circleAction} with $k=l=1$. Note that $\psi_t$ is symplectic and is the flow of the ``angular momentum" Hamiltonian, $L = \Im( \bar z w)$. The discrete Noether theorem, \Th{SymplecticNoether}, implies that $L$ is an invariant of $f$ as well.

\paragraph*{Reduction using invariants:}
Since the symmetry $\psi_t$ is a special case of \Eq{circleAction}, the basis \Eq{HilbertBasis} with $(z_1,z_2) \to (z,w)$ and $k=l=1$ generates its real polynomial invariants.
Note that these $\rho_i$ are not independent: they obey the relation
\beq{relation}
	\rho_3^2 + \rho_4^2 = \rho_1 \rho_2 \;,
\eeq
and the inequalities
\[
	\rho_1 \ge 0 \;,\quad \rho_2 \ge 0 \;.
\]
The map $(z, w) \to (\rho_1,\rho_2,\rho_3,\rho_4)$ is called the \emph{Hilbert map}.
Rewriting \Eq{4DSymplectic} in terms of the invariants gives
\bsplit{Hilbert}
	\rho_1' &= \rho_1( 1 - V)^2 + \rho_2 + 2 \rho_3(1 - V) \;, \\
	\rho_2' &= \rho_2 - 2 \rho_3 V + \rho_1 V^2 \;,\\
	\rho_3' &= \rho_3 - \rho_1 V + \rho_2' \;,
\end{split}\eeq
where $V= V(\rho_1)$
and $\rho_4 = -L$, the negative of the constant angular momentum. This is volume preserving with respect to the form $\Omega = d\rho_1 \wedge d\rho_2 \wedge d\rho_3$ on $\bR^3 [ \rho_1, \rho_2, \rho_3]$ and has an invariant $\rho_4^2 = \rho_1\rho_2-\rho_3^2$ inherited from the relation \Eq{relation}.

The Hilbert coordinates satisfy the Poisson structure given by
$\{\rho_1, \rho_2 \} =4 \rho_3$, $\{ \rho_1, \rho_3 \} = 2\rho_1$, $\{ \rho_2, \rho_3\} = - 2\rho_2$ with Casimir $\rho_1 \rho_2 - \rho_3^2$.
Moreover, the reduced mapping \Eq{Hilbert} is a Poisson map: it preserves this reduced Poisson structure.
The level set of the Casimir is a smooth submanifold provided $\rho_4 \not = 0$. On each such a level set (a smooth symplectic leaf) the map is area preserving. The reduction by the symmetry and by its generating invariant reduces the dimension of the map by two.

However, the invariant set $\rho_4 = 0$ is a half-cone. This case is a singular reduction, because the action $\psi_t$ is not free, since the origin $(z, w) = (0,0)$ is a fixed point for the whole group.

From this construction, the map in the symmetry direction, in
the form $\tau' = \tau + \rot(\rho_i)$, can be obtained as follows. The Hilbert map from full phase space $T^*\bR^2$ to the invariants is not invertible. It has the symmetry orbits as fibers. As pre-image in the fiber we can choose
\[
	(z, w) =h(\rho) =: \left(\sqrt{\rho_1}, \sqrt{\rho_2}
			e^{i \arg(\rho_3 - i\rho_4)}\right) \;.
\]

The angle increment $\rot$ is now defined by $f(h(\rho)) = \psi_{\rot} h( \rho')$.
The explicit form of the fiber map therefore is found from $z' = e^{i \rot} \rho_1'$
so that $\rot = \arg( z')$. The map $h$ is not defined when $\rho_3 = \rho_4 = 0$, i.e.\ at the origin of the $z$-plane and at the origin of the $w$-plane.

\paragraph*{Reduction using polar coordinates:}
Polar coordinates appear whenever the symmetry angle $\theta$ becomes a coordinate. There is a unique extension of this transformation of coordinates to a symplectic transformation of all of phase space that is linear in the momenta, the so-called cotangent lift, see, e.g.~\cite{Abraham78}. Denoting the new variables by $(r, \theta, p_r, p_\theta)$, the momentum $p_\theta$, conjugate to $\theta$, becomes the conserved quantity.
Specifically set $z = r e^{i\theta}$ with cotangent lift $w = e^{i\theta} ( p_r - i p_\theta/r)$ and rewrite the map in the new coordinates $(r, \theta, p_r) = ( \sqrt{ z \bar z}, \arg z, \Re( z \bar w )/r)$.

Having done most of the work in for the Hilbert basis already, we proceed as follows. In terms of the invariants, the polar coordinates are $(r, p_r, p_\theta) = ( \sqrt{\rho_1}, \rho_3/\sqrt{\rho_1}, \rho_4 )$ and the map is obtained from that of the invariants by eliminating $\rho_2$ using $\rho_2 = (\rho_3^2 + \rho_4^2)/\rho_1 = p_r^2 + p_\theta^2/r^2$. This gives an area-preserving map that depends on the parameter $p_\theta$
\bsplit{PolarCoord}
	r'   &= \left( p_\theta^2 / r^2 + (r + p_r - r V)^2 \right)^{1/2} \;, \\
	p_r' &= \left( r'^2 - r( r + p_r - r V)\right) / r' \;,
\end{split}\eeq
where $V = V(r^2)$. The fiber map is obtained from rewriting $z' = z + w' = z + w - z V$ in polar coordinates,
hence $r' e^{i\theta'} = e^{i\theta}( r + p_r - i p_\theta/r - r V)$ and therefore
\[
 \theta' = \theta + \arg( r + p_r - r V - i p_\theta/r) \;.
\]
This map works well unless $r' = 0$, which makes $p'_r$ and $\theta'$ undefined.
Points with $r'=0$ can only be reached when $p_\theta = 0$, but in that case it happens
as soon as $r + p_r = r V$. Excluding $p_\theta = 0$ gives a well-defined, smooth map.

\paragraph*{Reduction by lifting:}
For reduction by lifting we need to find a global Poincar\'{e}  section of the symmetry
flow $\psi_t$. As a first attempt try the surface defined by $\Im(z) = 0$ for $x =\Re(z) > 0$. The flow is tangent to the section when $\Re(z) = 0$, hence all of the plane $\{z = 0\}$ is tangent to the section. Unfortunately, this plane of tangency is not
invariant under the map. In order to avoid this difficulty, we can choose a larger
invariant set that contains the tangency set, and restrict the map to the complement of
this set.

The angular momentum $L(z,w)=-\Im( z \bar w)=\Im( \bar z w)$ is invariant both under the flow and the symplectic map. Notice that the plane $\{z = 0\}$
is contained in the level set $\{L(z,w)=0\}$, that is invariant.
The manifold $M \setminus \{ L(z,w) = 0\}$ has two connected components,
\[
	\tilde M=\{(z,w)\in\bC\times\bC: L(z,w)>0 \} \;,
\]
and the corresponding negative angular momentum set. As the analysis for either component is the same, we will now restrict the map $f$ and the flow $\psi_t$ to $\tilde M$.
The symmetry flow as a global Poincar\'{e} section on $\tilde M$:
\[
	\Sigma=\{(x,w)\in \tilde M: x>0\} \;.
\]
This section is relatively closed in $\tilde M$ and if $(x,w)\in\Sigma$ then $\Im(w)>0$.

We know that the lift of $f$ is of the form
$F(\sigma,\tau)=(k(\sigma),\tau+\rot(\sigma))$. Taking a point $\sigma=(x, w) \in \Sigma$
we can compute its image under \Eq{4DSymplectic} to obtain
\beq{LiftReduced}
	f(x,w)=(x + w - x V(x^2), w - x V(x^2)) \;.
\eeq
By \Eq{kDefine}, this equals $\psi_{\rot(\sigma)}(k(\sigma))$, so that $e^{-i \rot} ( x + w - x V)$ must be real and positive (where $x$ is assumed positive). Noticing that $\Im( x + w - x V)>0$, this immediately gives $\rot(\sigma) = \arg( x + w - x V)$, and we can choose the branch so that $0<\rot(\sigma)<\pi$. The reduced map thus becomes
\[
	k(x,w) = ( |x + w - x V| , e^{-i \rot(x,w)} (w - x V)) \;.
\]
By \Eq{nuDefn}, $k$ preserves the volume form $\nu = p_y dx \wedge dp_x \wedge dp_y$.

Because we started with a symplectic map the reduced map has an invariant, $\iota^* \Im(\bar z w) =  x \Im( w) = x p_y = p_\theta$. If we were to eliminate $p_y$ by fixing the invariant, we would again obtain the polar coordinate map \Eq{PolarCoord}.

\paragraph*{Comparison:}
The Hilbert map is similar to what we call the projection $\Pi$ to the section $\Sigma$,
while its ``inverse'' $h$ is like the inclusion map $\iota$. In differential geometry, a fiber bundle is said to have a ``global section'' if $h$ is continuous.
Notice, however, that the Hilbert map has fiber $\bS^1$, while when we construct the
global Poincar\'{e} section we have a bundle with base $\bS^1$ and fiber $\Sigma$.
In fact, as pointed out in \Sec{CircleAction}, some finite covering of the bundle is a direct product with $\bS^1$. In the present example the manifold $\tilde M$ is
topologically a direct product $\bS^1 \times \bR \times \bR^+$.

The Hilbert map  may be applied to points with isotropy where projection to $\Sigma$ is not defined. The reduced map \Eq{Hilbert} is well-defined, even when the $\bS^1$-bundle does not possess a global section or when there is no global Poincar\'{e} section.
Reduction by lifting is more efficient in that one does not need to introduce extra coordinates; moreover, one does not need to know the ``good'' coordinate system to do the reduction.
Instead the choice of global Poincar\'{e} section defines such a coordinate system.


%

 \section{Conclusions}\label{sec:Conclusions}
We have studied maps with continuous symmetries, specializing to the case that the symmetries have a global Poincar\'e section. We showed in \Th{Reduction} that if a necessary topological condition is satisfied, then the map has a lift such that in certain coordinates the lift has the skew-product form \Eq{skewProduct}. We called this ``reduction by lifting" because the map on the section, the reduced map $k$, describes the dynamics modulo the symmetry. The fiber map, which corresponds to a translation, is obtained naturally from the symmetry flow.

The topological conditions for the existence of a lift require that the homotopy homomorphism induced by the map leaves the fundamental group of the Poincar\'e section invariant. This requirement is trivial if the section is simply connected. In principle, the restriction could be avoided if one first lifts the dynamics to the universal cover; however---as we saw in the examples---almost all orbits may become unbounded. In many cases, a minimal lift using only the symmetry flow seems more useful.

If the map and its symmetry are volume-preserving, then the reduced map is also volume-preserving as shown in \Th{ReducedVolumeForm}. If both the map and its symmetry preserve an invariant, then the reduced map also has an invariant. Finally, if the map and its symmetry are symplectic, then, as in Noether's theorem, the existence of a symmetry implies an invariant, providing the map has recurrent orbits, recall \Th{SymplecticNoether}.

A number of examples were given to compare and contrast reduction by lifting to standard reduction techniques. Reduction by lifting is more parsimonious than the Hilbert mapping technique, which can result in a high-dimensional map that must satisfy a number of constraints. It is more explicit than the standard reduction by group orbit technique, and moreover applies when the action of the symmetry is neither proper nor free. When the action is not free, reduction by lifting need not lead to a map on a singular space.

However, the existence of a global section for the symmetry flow is a strong restriction. This can be overcome in some cases, as we showed, by restriction of the dynamics to invariant strata of the symmetry flow.

\appendix

\section{Forms and Lie Derivatives}\label{app:notation}
Here we set out our notation, which follows, e.g., \cite{Abraham88}. The set of $k$-forms on a manifold $M$ is denoted by $\Forms^k(M)$, and the set of $C^1$ vector fields by $\VectorField(M)$. If $\alpha \in \Forms^k(M)$ and $V_1, V_2, \ldots V_k \in \VectorField(M)$, then the pullback, $f^*$, of a form $\alpha$ by a diffeomorphism $f$ is defined by
\beq{PullBackF}
	(f^*\alpha)_x(V_1,V_2,...,V_k) =\alpha_{f(x)}(Df(x)V_1(x),\ldots, Df(x)V_k(x)) \;.
\eeq
The pullback can be applied to a vector field $V$ as well:
\beq{PullBackV}
	(f^{*}V)(x) := (Df(x))^{-1}V(f(x)) \;.
\eeq
The push-forward operator is defined as
\(
	f_* = (f^{-1})^* \;.
\)
The inner product of $\alpha$ with $V$ is defined
as the $(k-1)$-form
\[
	i_V \alpha := \alpha(V,\cdot,\ldots,\cdot) \;.
\]
Suppose that $\vphi_t$ is the ($C^1$) flow of a vector field $V$, so that $\vphi_0(x) = x$,
and $\frac{d}{dt} \vphi_t(x) = V(\vphi_t(x))$.
Then the Lie derivative
with respect to $V$ is the linear operator defined by
\beq{LieDeriv}
	L_V (\cdot) := \left. \frac{d}{dt}\right|_{t=0} \vphi_t^* (\cdot)
\eeq
where $\cdot$ is any tensor. In particular for a vector field $X$,
\beq{LieBracket}
	L_V X = [V,X]
\eeq
is the Lie bracket. The Lie derivative acting on
differential forms obeys Cartan's homotopy formula
\beq{Cartan}
	L_V \alpha = i_V ( d \alpha) + d(i_V \alpha) \;.
\eeq


\bibliographystyle{alpha}
\bibliography{integrable}

\end{document}